%% file: t_main.tex
\documentclass[11pt]{article}

\usepackage[round]{natbib}
\usepackage{amsmath,amsthm,enumitem,nicefrac,bbm,verbatim,amssymb,amsfonts,amscd, graphicx,algpseudocode,hyperref,float,mathtools,bm,xcolor,appendix,subcaption,diagbox,caption,multirow,makecell}
\usepackage{tikz}
\usepackage{tikz-3dplot}
\usetikzlibrary{positioning, decorations.pathreplacing, calc, intersections, pgfplots.groupplots}
\usetikzlibrary{positioning, decorations.markings}
\usepackage{pgfplots}
\usepackage{tikz-network}
\usetikzlibrary{shapes,decorations,arrows,calc,arrows.meta,fit,positioning}

\pgfplotsset{compat=1.5}
\pgfplotsset{scaled x ticks=false}
\tikzset{
	-Latex,auto,node distance =1 cm and 1 cm,semithick,
	state/.style ={ellipse, draw, minimum width = 0.7 cm},
	point/.style = {circle, draw, inner sep=0.04cm,fill,node contents={}},
	bidirected/.style={Latex-Latex},
	el/.style = {inner sep=2pt, align=left, sloped}
}
\usepackage[utf8]{inputenc}
\usepackage{authblk}
\usepackage[T1]{fontenc}
\usepackage[ruled, lined, linesnumbered, commentsnumbered, longend]{algorithm2e}

\input{glodef}
\input{locdef}

\usepackage{utopia}
\usepackage{xifthen}
\usepackage{soul}
\usepackage{thm-restate}
\usepackage{hyperref}
\newcommand*{\SINGLE}{}
\title{Testing Partially-Identifiable Causal Queries Using Ternary Tests}
\author[1]{Sourbh Bhadane} 
\author[1]{Joris M. Mooij}
\author[2]{Philip Boeken}
\author[3]{Onno Zoeter}
\affil[1]{Korteweg-de Vries Institute for Mathematics, University of Amsterdam}
\affil[2]{Department of Mathematics, VU Amsterdam}
\affil[3]{Booking.com, Amsterdam, The Netherlands}
\begin{document}

\maketitle
\begin{abstract}
  We consider hypothesis testing of binary causal queries using observational data. Since the mapping of causal models to the observational distribution that they induce is not one-to-one, in general, causal queries are often only partially identifiable. When binary statistical tests are used for testing partially-identifiable causal queries, their results do not translate in a straightforward manner to the causal hypothesis testing problem. We propose using ternary (three-outcome) statistical tests to test partially-identifiable causal queries. We establish testability requirements that ternary tests must satisfy in terms of uniform consistency and present equivalent topological conditions on the hypotheses. To leverage the existing toolbox of binary tests, we prove that obtaining ternary tests by combining binary tests is complete. Finally, we demonstrate how topological conditions serve as a guide to construct ternary tests for two concrete causal hypothesis testing problems, namely testing the instrumental variable (IV) inequalities and comparing treatment efficacy. 
\end{abstract}

\input{intro.tex}
\input{related.tex}
\input{setup}

\input{preliminaries}

\input{math_ternary.tex}

\input{iv_three_arm}
\input{manski}
\input{colliders}
\input{numerical}

\input{discussion.tex}
\bibliographystyle{apalike}
\bibliography{biblio.bib}
\newpage
\appendix
\input{appendix}
\end{document}

%% file: glodef.tex
\newtheorem{Theorem*}{Theorem}

\newtheorem{Claim*}[Theorem]{Claim}

\newtheorem{CounterExample*}{$\overline{\hbox{\bf Example}}$}

\newtheorem{Example*}[Theorem]{Example}

\newtheorem{Intuition*}[Theorem]{Intuition}
\newtheorem{Joke*}[Theorem]{Joke}

\newtheorem{Lemma*}[Theorem]{Lemma}
\newtheorem{Open problem}[Theorem]{Open problem}

\newtheorem{Question*}[Theorem]{Question}

\usepackage{fullpage}

\newtheorem{theorem}{Theorem}
\newtheorem{proposition}[theorem]{Proposition}

\newtheorem{lemma}[theorem]{Lemma}

\newtheorem{definition}[theorem]{Definition}

\def \bSubexa    {\begin{subexa}}

\usepackage[colorinlistoftodos,textsize=scriptsize]{todonotes}

\newcommand{\ignore}[1]{}

\newcommand{\NN}{\mathbb{N}}

\newcommand{\RR}{\mathbb{R}}

\newcommand{\MM}{\mathbb{M}}

\def \cB     {{\cal B}}

\def \cP     {{\cal P}}

\def \cX     {{\cal X}}
\def \cY     {{\cal Y}}
\def \cZ     {{\cal Z}}

\def \Paren#1{{\left({#1}\right)}}

\def \Brack#1{{\left[{#1}\right]}}

\def\ignore#1{}

\newcommand{\bi}{\begin{itemize}}
\newcommand{\ei}{\end{itemize}}

\def\orpro{\mathop{\mathchoice
   {\vee\kern-.49em\raise.7ex\hbox{$\cdot$}\kern.4em}
   {\vee\kern-.45em\raise.63ex\hbox{$\cdot$}\kern.2em}
   {\vee\kern-.4em\raise.3ex\hbox{$\cdot$}\kern.1em}
   {\vee\kern-.35em\raise2.2ex\hbox{$\cdot$}\kern.1em}}\limits}

\def\andpro{\mathop{\mathchoice
 {\wedge\kern-.46em\lower.69ex\hbox{$\cdot$}\kern.3em}
 {\wedge\kern-.46em\lower.58ex\hbox{$\cdot$}\kern.25em}
 {\wedge\kern-.38em\lower.5ex\hbox{$\cdot$}\kern.1em}
 {\wedge\kern-.3em\lower.5ex\hbox{$\cdot$}\kern.1em}}\limits}

\def\simge{\mathrel{%
   \rlap{\raise 0.511ex \hbox{$>$}}{\lower 0.511ex \hbox{$\sim$}}}}

\def\simle{\mathrel{
   \rlap{\raise 0.511ex \hbox{$<$}}{\lower 0.511ex \hbox{$\sim$}}}}

\providecommand{\email}[1]{\href{mailto:#1}{\nolinkurl{#1}\xspace}}



%% file: locdef.tex
\newcommand{\model}{M}

\newcommand{\doop}[1]{\text{do}(#1)}

\newcommand{\enop}{V}
\newcommand{\exrv}{W}
\newcommand{\spc}{\mathcal{X}}

\newcommand{\cg}[1]{G\Paren{#1}}

\newcommand{\modelivrelax}{\MM_{\text{IV}}}

\newcommand{\distq}{\cP_{H^0_{Q}}}

\newcommand{\distqalt}{\cP_{H^1_{Q}}}
\newcommand{\nullcausalq}{H^0_{Q}}

\newcommand{\altcausalq}{H^1_{Q}}

\newcommand{\nulliv}{H^0_{\text{IV}}}
\newcommand{\altiv}{H^1_{\text{IV}}}
\newcommand{\distivnull}{\cP_{H^0_{\text{IV}}}}
\newcommand{\distivalt}{\cP_{H^1_{\text{IV}}}}

\newcommand{\modelsivedge}{\MM_{\text{IV,edge}}}

\newcommand{\nll}{H^0}
\newcommand{\alt}{H^1}
\newcommand{\idk}{2}
\newcommand{\test}{\phi}
\newcommand{\fone}{\alpha}
\newcommand{\fzero}{\beta}
\newcommand{\param}{P}
\newcommand{\TT}{\text{TT}}
\newcommand{\prt}{R}
\newcommand{\sn}{\text{SN}}
\newcommand{\sa}{\text{SA}}

\newcommand{\world}{W}


%% file: intro.tex
\section{Introduction}\label{sec:intro}


A causal query is deemed identifiable from observational data whenever causal models that induce the same observational distribution also agree on the query of interest. For hypothesis testing of binary causal queries, such as testing the existence of an edge in a causal graph or testing whether a causal effect is $0$ or not, identification allows us to translate the results of a statistical test based on observational data to implications about causal hypotheses. For identifiable binary causal queries, the set of observational distributions induced by causal models where the binary causal query evaluates to $0$, is disjoint from those induced by causal models where the binary causal query evaluates to $1$. However, often causal queries of interest are ``not identifiable'' from observational data because of reasons such as presence of unobserved confounders, positivity violations, etc. In such cases, the aforementioned sets of induced observational distributions overlap. If the overlap is partial, then the causal query is \textit{partially-identifiable}. If the sets completely overlap, the causal query is deemed \textit{non-identifiable} since every distribution could be induced by causal models corresponding to both values of the causal query. 

When binary statistical tests are used to test partially-identifiable causal queries, their results do not translate in a straightforward manner to the causal hypothesis testing problem. For example, consider a binary test based on observational data such that the statistical null hypothesis is the set of observational distributions induced by causal models where the causal query evaluates to $0$. In this case, rejecting the null implies that the causal query, evaluated on the underlying causal model, evaluates to $1$. However, not rejecting the null presents an identifiability issue since the null includes the non-empty intersection of the induced sets of observational distributions. Therefore, not rejecting the null only allows concluding ``unidentifiable''. See \citet{BhadaneMBZ25} for an example of a statistical test for the absence of a direct effect that leads to a similar conclusion.

In this work, we propose testing partially-identifiable causal queries using statistical tests that output one of three outcomes; ``reject the null'', ``don't reject the null'', and ``unidentifiable'' where the former two correspond to the set differences of the induced observational distributions and the latter corresponds to the intersection. Clearly, a ternary test provides more fine-grained information than a binary test. In addition, translating the result of a ternary test to the causal hypothesis testing problem is straightforward; if the ternary test outputs either ``reject the null'' or ``don't reject the null'', the conclusion for the causal hypotheses is exact, otherwise it is unidentifiable from observational data. 

We extend the binary hypothesis testing framework to testing overlapping hypotheses with three outcomes which can be viewed as testing three disjoint hypotheses with three outcomes. We consider an error to be when the output of a ternary test is different from the hypothesis that contained the distribution that samples were drawn from. We first define error-guarantee desiderata for ternary tests. In binary hypothesis testing, these are made in terms of asymptotic consistency and error-control \citep{DemboPeres94, Ermakov17, GeninKelly17, Genin18, BoekenSGM25} and then related to topological conditions on the hypotheses in a suitable topology of probability measures. We follow suit by proposing analogous equivalent topological conditions for existence of `good' ternary tests. Given the vast toolbox of binary tests developed so far, we study ternary tests that are constructed by combining binary tests, which we call a ``two-stage ternary test''. We show via topological conditions on the hypotheses that two-stage ternary testing is complete in the sense that it attains the same error control as that of ternary tests that control ``column-wise'' errors. We demonstrate how topological conditions on hypotheses can guide a practitioner in constructing ternary tests by considering two concrete applications to testing causal queries, namely testing the instrumental variable (IV) inequalities \citep{Pearl95} expressed as conditions on the interventional Markov kernel using observational data, and comparing the treatment efficacy of a treatment from observational data against a threshold that might, for example, correspond to the estimated efficacy of a competing treatment obtained from a large randomized control trial. 

%% file: related.tex
\subsection{Related Work}
Ternary hypothesis testing has often been proposed in the context of mitigating shortcomings of the Neyman-Pearson null hypothesis significance testing framework. One of the earliest mentions can be found in the works of Neyman \citep{Neyman76}, Tukey \citep{Tukey91, JonesTukey00} and \citet{Hays63}. In particular, \citet{EstevesISS16} show that statistical tests that are logically coherent and optimal are ternary and not binary. \citet{Berg04} considers ternary tests for simple binary hypothesis testing and provide an optimal likelihood ratio test for the same. \citet{IzbickiCCLSS23} considers a region-based test that constructs confidence regions based on data and define a ternary test based on whether the confidence region is completely within the null, completely within the alternative or has non-empty intersection with both. \citet{BoekenSGM25} consider ternary tests to obtain Type-I and Type-II error control with finite-precision measurements. In sequential hypothesis testing, since a sample size is not determined apriori, at every stage, a sequential test decides between accepting the null, rejecting the null, and collecting another sample \citep{Wald45}. See also, \cite{Berger85} for a Bayesian analogue of sequential hypothesis testing and \cite{DassBerger03} for examples of ternary tests in Bayesian testing of composite hypotheses. \citet{GarivierK21} consider testing $M$ overlapping hypotheses in the sequential testing framework using $M$-ary tests by running $M$ parallel binary tests with a restrictive notion of error. While some of the aforementioned works could be viewed as considering testing of overlapping hypotheses whereas others could be viewed as considering ternary tests for disjoint hypotheses, they only consider specialized cases and do not consider the question of existence of ternary tests that satisfy desiderata related to error-guarantees. In causal hypothesis testing, ternary tests were considered in passing by \citet{RobinsSSW03, ZhangSpirtes02}. In addition, ternary tests have been proposed in the specific context of detecting colliders in causal discovery algorithms \citep{RamseySZ06, ZhangSpirtes08} but our work provides a general analysis of ternary tests proving their existence under topological conditions and a way to construct ternary tests by combining binary tests. 





%% file: setup.tex
\section{Partial Identification: Need for Ternary Tests}\label{sec:setup}
We formalize the need for ternary tests, as mentioned in Section~\ref{sec:intro}, under the semantic framework of structural causal models (SCMs) and provide definitions related to causal models in Appendix~\ref{app:prelim_causal}. Consider a family of causal models $\MM$ and a binary causal query, $Q : \MM \mapsto \left \lbrace 0,1 \right \rbrace$. Define the causal null hypothesis 
\begin{equation}\label{eq:causalnull}
\nullcausalq \triangleq \left \lbrace \model \in \MM : Q(\model) = 0 \right \rbrace, 
\end{equation}
where the causal alternative hypothesis is defined by $\altcausalq \triangleq \MM \setminus \nullcausalq$. Given that we often only have access to observational data, i.e., data sampled from $P_{\model}\Paren{X_{\enop}}$, the set of observational distributions induced by the causal null hypothesis is given by $$\distq \triangleq \left \lbrace P_{\model} : \model \in \nullcausalq \right \rbrace,$$ and the set of observational distributions induced by the causal alternative hypothesis is given by $\distqalt \triangleq \left \{ P_{\model} : \model \in \altcausalq\right \}.$

If $Q$ is partially-identifiable, then $\distq \cap \distqalt \neq \emptyset$, i.e., while the causal null $\nullcausalq$ and the corresponding alternative $\altcausalq$ are disjoint, the resulting sets of distributions that they induce, overlap, in general. When a binary-outcome statistical test with null hypothesis $\distq$ and disjoint alternative hypothesis $\world \backslash \distq$ is used where $\world \triangleq \distq \cup \distqalt$,\footnote{Note that since $\distqalt \cap \distq$ is non-empty, $\distqalt$ cannot be a disjoint alternative hypothesis.} the results of this statistical test have the following implications for the corresponding causal hypotheses: rejecting the null, implies rejecting the causal null, however, not rejecting the null does not imply that the underlying causal model is in the causal null since $\distq \cap \distqalt \subseteq \distq $ implying that given no apriori information, a distribution in $\distq$ could have been induced by a causal model in $\altcausalq$. A similar observation was also made by \citet{ZhangSpirtes02} and more recently in \citet{BhadaneMBZ25} where the relevant causal query was the absence of a direct effect. For partially-identifiable causal queries, since $\distq$ and $\distqalt$ could overlap in general, we propose testing such queries using ternary tests. 

A natural question that arises is how to construct a `good' ternary test for testing a given partially-identifiable binary causal query. At this stage we can decouple the context of partial identification and consider a) what are desiderata for a `good' ternary test, and b) whether such `good' ternary tests exists in the first place. We base our answer to the latter question on topological conditions that the statistical hypotheses must satisfy. While these topological conditions might seem restricted to the question of existence of ternary tests, we subsequently demonstrate how a practitioner could use them to guide their construction of a ternary test. Finally, we demonstrate this guide for constructing  ternary tests for the partial identification examples in Section~\ref{sec:applications}.


%% file: preliminaries.tex
\section{Preliminaries} \label{sec:prelim}
Denote $\cP\Paren{\cX}$ as the set of all probability measures defined on the Borel $\sigma$-algebra $\cB\Paren{\cX}$ where $\cX$ is a separable metric space. Hypotheses $\nll,\alt \subseteq \cP\Paren{\cX}$ are considered with respect to the weak topology on $\cP\Paren{\cX}$ defined by the notion of weak convergence of probability measures \citep{Billingsley94}. For simplicity, we assume that $\world \triangleq \nll \cup \alt$ is compact in the topology of setwise convergence on $\cP\Paren{\cX}$ which is defined by pointwise convergence of probability measures on all Borel measurable sets, thus making it finer than the weak topology and implying that $\world$ is compact in the weak topology on $\cP\Paren{\cX}$. 
 We present a few known results below for existence of binary tests for i.i.d. data with certain desiderata of error-guarantees. 
\begin{definition}[Weak and Strong Consistency: Binary Tests]
Given disjoint hypotheses $\nll, \alt \subseteq \mathcal{P}\Paren{\mathcal{X}}$, a binary test $\phi = \left \lbrace \test_n \right \rbrace_{n \in \NN}$, where $\phi_n : \mathcal{X}^n \mapsto \left \lbrace 0,1 \right \rbrace$ is measurable, is
\begin{enumerate}
    \item weakly consistent if for all $i \in \left \lbrace 0,1 \right \rbrace$,  for all $P \in H_i$, $ \lim\limits_{n \rightarrow \infty}  P^n\Paren{\test_n\Paren{X^n}=i} = 1$ ;
    \item strongly consistent if for all $i \in \left \lbrace 0,1 \right \rbrace$, for all $P \in H_i$, $ P^{\infty}\Paren{\test_n\Paren{X^n} \neq i \text{ for finitely many }n} = 1$. 
\end{enumerate}
\end{definition}

\citet{DemboPeres94} provide sufficient conditions on $\nll, \alt$ for the existence of a strongly consistent test which are necessary if in addition the existence of $p>1$-integrable densities for every measure in $\world$ is assumed. Under a weaker assumption of $\world$ being $\sigma$-compact, these conditions are proven to be necessary for the existence of a weakly consistent test \citep[Theorem 4.4]{Ermakov17} and also for the existence of a strongly consistent test \citep[Theorem 3.3]{Ermakov17}. We present relevant results from \citet{Ermakov17} under the aforementioned assumptions on $\cX, \world$ for disjoint $\nll, \alt$. See Appendix~\ref{app:sec:ermakov} for a proof. 
\begin{restatable}{proposition}{consistent}\label{prop:consistencyfsigma}
 The following are equivalent:
 \begin{enumerate}
     \item[(i)]there exists a binary test that is weakly consistent.
     \item [(ii)] $\nll$ and $\alt$ are disjoint and $F_{\sigma}$ (i.e., countable union of closed sets) in the weak topology on $\cP\Paren{\cX}$.
 \end{enumerate}
 \end{restatable}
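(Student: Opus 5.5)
The equivalence is to be established relative to the standing assumptions: $\cX$ is a separable metric space and $\world = \nll \cup \alt$ is compact in the setwise topology, hence compact in the weak topology. Both directions essentially follow \citet{Ermakov17}. Our task is to check that $\sigma$-compactness of $\world$ follows from these assumptions, and to write out the constructive direction so that it can be reused later for ternary tests.

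\emph{(ii) $\Rightarrow$ (i).} Write $\nll = \bigcup_k F^0_k$ and $\alt = \bigcup_k F^1_k$ with each $F^i_k$ weakly closed, and we may take the unions to be increasing. Each $F^i_k \cap \world$ is a closed subset of the compact set $\world$, hence compact. Because $\nll$ and $\alt$ are disjoint, $F^0_k$ and $F^1_j$ are disjoint compact sets. The weak topology on $\cP\Paren{\cX}$ is metrizable for separable $\cX$ (for instance by the bounded-Lipschitz metric $d$), so any two of these sets lie at positive $d$-distance. The test is built around the empirical measure $\hat P_n$. Since $\cX$ is separable, Varadarajan's theorem gives $d(\hat P_n, P) \to 0$ almost surely, and therefore also in probability.

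We then define $\phi_n$ through a slowly growing index $k_n \to \infty$. The test outputs $0$ if $d(\hat P_n, F^0_{k}) < d(\hat P_n, F^1_{k})$ for the first $k \le k_n$ at which one of these distances falls below a threshold $\epsilon_k$, and outputs $1$ otherwise. Here $\epsilon_k$ is a third of the gap $d(F^0_k, F^1_k)$. Fix $P \in \nll$. Then $P \in F^0_{k^*}$ for some $k^*$. For every $k < k^*$, the point $P$ lies either outside $F^1_k$ at positive distance or outside both sets, and in each case the finitely many comparisons made at levels $k \le k^*$ are eventually decided correctly with probability tending to $1$. The situation for $P \in \alt$ is symmetric. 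This gives weak consistency. Measurability of $\phi_n$ holds because $x^n \mapsto d(\hat P_n, F)$ is continuous in $x^n$.

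The main obstacle is making this diagonal scheme correct. One must ensure that an earlier level $k < k^*$ cannot capture $P$ on the wrong side with non-vanishing probability. If $P$ is at positive distance from both $F^0_k$ and $F^1_k$, convergence of $\hat P_n$ prevents any trigger at level $k$. If $P$ is at distance $0$ from $F^1_k$, then $P \in F^1_k \subseteq \alt$, which contradicts $P \in \nll$. With the threshold chosen as above, a wrong trigger is therefore impossible in the limit, and this is where disjointness is used. As an alternative, if the diagonal bookkeeping becomes awkward, we would instead cite \citet[Theorem 4.4]{Ermakov17} directly, since $\world$ is $\sigma$-compact.

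\emph{(i) $\Rightarrow$ (ii).} Disjointness is immediate, because no distribution can have $\phi_n$ converge in probability to both $0$ and $1$. For the $F_\sigma$ property, define $A_{n,m} = \{P \in \world : P^{j}(\phi_j = 0) \ge 1/2 \text{ for all } j \ge n\}$ for a fixed threshold. The map $P \mapsto P^j(\phi_j = 0)$ is setwise continuous, but it need not be weakly continuous. To handle this, we follow Ermakov's argument: weak and setwise compactness of $\world$ lets us pass to weak closures, using the fact that $\world$ is compact in both topologies and the identity map from the setwise topology to the weak topology is continuous. Hence a setwise-closed subset of $\world$ is setwise compact, hence weakly compact, hence weakly closed. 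It follows that $\nll = \bigcup_n A_n$, where $A_n = \{P \in \world : P^j(\phi_j=0) \ge 1/2 \ \forall j\ge n\}$. Each $A_n$ is setwise closed and contained in $\world \setminus \alt$, because points of $\alt$ eventually have this probability below $1/2$. So $\nll$ is weakly $F_\sigma$, and the same argument applies to $\alt$.
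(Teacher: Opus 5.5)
Your explicit test for (ii)$\Rightarrow$(i) is not consistent. The sentence meant to dispose of the ``main obstacle'' is false: positive distance from $F^0_k$ and $F^1_k$ does not prevent a trigger at level $k$, because a trigger only needs $d(P,F^i_k)<\epsilon_k$. Since the $F^0_k$ increase, points of $F^0_{k^*}$ can lie within $\epsilon_k=d(F^0_k,F^1_k)/3$ of $F^1_k$ for some $k<k^*$, and then level $k$ decides $1$.

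Concretely, let $\cX=\{0,1\}$ and identify $P$ with $p=P(X=1)$; the bounded-Lipschitz metric is a constant multiple of $|p-q|$. Take $\nll=[0,\tfrac12)$, $\alt=[\tfrac12,1]$, $F^0_k=[0,\tfrac12-\tfrac1{k+2}]$ and $F^1_k=\alt$. In units of $|p-q|$ we get $\epsilon_1=\tfrac19$. The point $p=\tfrac49\in\nll$ has $d(p,F^1_1)=\tfrac1{18}<\epsilon_1$ and $d(p,F^0_1)=\tfrac5{18}$. So level $1$ fires, and your test outputs $1$ with probability tending to one.

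A repair needs thresholds $r_n\downarrow 0$, so that every level below $k^*$ eventually stops firing. It also needs one fixed sequence $r_n$ with $P^n(d(\hat P_n,P)\le r_n)\to 1$ for every $P\in\world$, i.e.\ a rate for the empirical measure valid across all of $\world$. That is easy for finite $\cX$ (e.g.\ by Hoeffding's inequality) but you have not shown it in general.

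Your fallback of citing Ermakov is exactly the paper's proof, but state its justification correctly. $\sigma$-compactness of $\world$ only supplies the standing hypothesis on $\alt$ in Ermakov's Theorem~4.4. Its condition (ii) needs $\nll$ itself to be $\sigma$-compact, which holds because each $F^0_k$ is a closed subset of the compact $\world$. You also need that the weak and setwise topologies coincide on $\world$ (a continuous bijection from a compact space onto a Hausdorff one), to pass between Ermakov's topology and the weak one.

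For (i)$\Rightarrow$(ii) you take a genuinely different route. The paper applies Ermakov's Theorem~4.4 twice, with the roles of $\nll$ and $\alt$ swapped, and intersects the resulting $F_\sigma$ sets with the closed set $\world$. Your sets $A_n=\{P\in\world:P^j(\phi_j=0)\ge\tfrac12\ \forall j\ge n\}$ give a self-contained argument that avoids the necessity half of Ermakov's theorem. The bookkeeping is correct: $\nll=\bigcup_n A_n$, $A_n\cap\alt=\emptyset$, and setwise closed subsets of $\world$ are weakly closed.

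However, the step you assert without proof, setwise continuity of $P\mapsto P^j(\phi_j=0)$, can fail on $\cP(\cX)$ for $j\ge 2$. Take Lebesgue measure $\lambda$ on $[0,1)$ and $\phi_2(x_1,x_2)=1$ iff $x_1-x_2\in\mathbb{Q}$. Every setwise neighbourhood of $\lambda$ contains an atomic $Q$ that agrees with $\lambda$ on the finite partition defining the neighbourhood: put one atom of mass $\lambda(C)$ in each cell $C$, with all atoms in a single coset of $\mathbb{Q}$. This is possible since $C+\mathbb{Q}$ has full measure mod $1$. Then $Q^2(\phi_2=1)=1\neq 0=\lambda^2(\phi_2=1)$.

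What you actually need is continuity relative to $\world$. That does hold, but only because of compactness: setwise compactness forces uniform countable additivity, hence a dominating measure with uniformly integrable densities, and on such a family $P\mapsto P^{\otimes j}$ is setwise continuous. State this (or cite it) rather than treating it as automatic.
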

Weak and strong consistency are notions of point-wise consistency. A stronger error-guarantee desideratum is that the above hold uniformly for all $P$ in relevant hypotheses. We will only consider the uniform analogue of weak consistency. 
\begin{definition}[Uniform Consistency: Binary Tests]
Given disjoint hypotheses $\nll, \alt \subseteq \mathcal{P}\Paren{\mathcal{X}}$, a binary test $\phi = \left \lbrace \test_n \right \rbrace_{n \in \NN}$, where $\phi_n : \mathcal{X}^n \mapsto \left \lbrace 0,1 \right \rbrace$ is measurable, is uniformly consistent if for all $i \in \left \lbrace 0,1 \right \rbrace$, $ \lim\limits_{n \rightarrow \infty} \sup\limits_{P \in H_i} P^n\Paren{\test_n\Paren{X^n}\neq i} = 0$.
\end{definition}
Necessary and sufficient conditions for existence of uniformly consistent binary tests were given by \citet{Berger51} that are technical and difficult to apply. Le Cam and Schwartz \citep{LeCamSchwartz60} provide the same in terms of the topology of setwise convergence on all $n$-fold products of probability measures, which are ``rather inaccessible'' \citep{LeCam86}.  Much of subsequent work has focused on introducing assumptions on $\world$ so that these conditions are formulated in the weak topology. See \citet{Kleijn22} for a comprehensive overview. The following is a rephrasing of \citet[Theorem 4.1]{Ermakov17} under aforementioned assumptions on $\cX, \world$ and disjoint $\nll,\alt$. See Appendix~\ref{app:sec:ermakov} for a proof.  
\begin{restatable}{proposition}{unifclopen}\label{prop:unifclopen}
 The following are equivalent:
 \begin{enumerate}
     \item[(i)]there exists a binary test that is uniformly consistent.
     \item [(ii)] $\nll$ and $\alt$ are disjoint and closed in the weak topology on $\cP\Paren{\cX}$.
 \end{enumerate}
 \end{restatable}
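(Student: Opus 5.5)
We prove the two implications separately, working in the weak topology on $\cP\Paren{\cX}$ and using throughout the standing assumption that $\world = \nll \cup \alt$ is compact (hence weakly compact).

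\emph{(ii) $\Rightarrow$ (i).} Since $\nll,\alt$ are closed subsets of the compact set $\world$, they are compact. We separate them by a finite collection of bounded continuous functions. For each $P \in \nll$ and $Q \in \alt$ with $P \neq Q$, there is a bounded Lipschitz $f$ with $\int f\,dP \neq \int f\,dQ$; this holds because $\cX$ is a separable metric space. Hence there are a finite set $F = \{f_1,\dots,f_k\}$ of bounded Lipschitz functions, valued in $[0,1]$, and $\epsilon>0$ such that the map $T(P) = \Paren{\int f_j\,dP}_{j\le k}$ satisfies $\text{dist}\Paren{T(\nll),T(\alt)} \ge 3\epsilon$. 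To find $F$ and $\epsilon$, consider the continuous map $(P,Q)\mapsto \max_{f\in F}\left|\int f\,dP - \int f\,dQ\right|$ and pass to a finite subcover on the compact set $\nll\times\alt$.

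The test is as follows. Form the empirical vector $T(\hat P_n)$, and set $\test_n = 0$ if $\text{dist}\Paren{T(\hat P_n),T(\nll)} \le \epsilon$, and $\test_n = 1$ otherwise. The coordinates of $T(\hat P_n)$ are bounded i.i.d.\ averages, so Hoeffding's inequality gives $\sup_{P}P^n\Paren{\|T(\hat P_n)-T(P)\|_\infty>\epsilon}\le 2k e^{-2n\epsilon^2}$. This bound is uniform in $P$, and it yields uniform consistency on both hypotheses.

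\emph{(i) $\Rightarrow$ (ii).} Disjointness is immediate: if some $P$ lay in both hypotheses, then both $P^n(\test_n=0)\to 1$ and $P^n(\test_n=1)\to 1$, which is impossible. For closedness, suppose $P_m \in \nll$ and $P_m \to P$ weakly with $P \in \world$; we must show $P\in\nll$ (closedness in $\world$ together with compactness of $\world$ then gives closedness in $\cP\Paren{\cX}$). Assume instead that $P \in \alt$. Fix $n$ large enough that $\sup_{\nll}P'^n(\test_n=1)<1/3$ and $\sup_{\alt}P'^n(\test_n=0)<1/3$. The aim is to show $P_m^n(\test_n=1)\to P^n(\test_n=1)$, which yields a contradiction, since the left side is below $1/3$ and the right side is above $2/3$.

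This continuity step is the main obstacle. The acceptance region $\{\test_n=1\}$ is only measurable, so weak convergence of $P_m$ does not control its probability. We use the compactness in the \emph{setwise} topology here. From the sequence $P_m$ we extract a subnet converging setwise in $\world$; its limit must coincide with the weak limit $P$, because setwise convergence implies weak convergence and the weak topology is Hausdorff. Setwise convergence of $P_m$ to $P$ gives setwise convergence of $P_m^n$ to $P^n$ on the product $\sigma$-algebra. To see this, first check convergence on rectangles, then extend by a monotone-class/$\pi$-$\lambda$ argument. That extension needs a uniformity which we obtain from the Vitali–Hahn–Saks theorem, or alternatively by approximating the measurable set under a dominating measure $\sum 2^{-m}P_m + P$. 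Since the convergence holds along a subnet rather than the whole sequence, the argument is phrased with nets, and we check that the contradiction only needs some subnet.
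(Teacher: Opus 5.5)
Your (ii)$\Rightarrow$(i) argument is correct, and it takes a different route from the paper. The paper proves both directions by citing Ermakov's Theorem~4.1, together with the fact that the weak and setwise topologies agree on the setwise-compact $\world$. That fact holds because the identity $(\world,\text{setwise})\to(\world,\text{weak})$ is a continuous bijection from a compact space onto a Hausdorff space. Your construction (finitely many bounded Lipschitz functionals separating $\nll$ and $\alt$, followed by Hoeffding) is self-contained, needs only weak compactness of the two hypotheses, and gives exponentially small errors. One fix to the write-up: as written, $F$ is used before it is defined. First choose, for each pair $(P,Q)$, a function $f_{P,Q}$ and a neighbourhood of $(P,Q)$ on which $|\int f_{P,Q}\,dP'-\int f_{P,Q}\,dQ'|>\delta_{P,Q}/2$; only then take a finite subcover.

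The gap is in (i)$\Rightarrow$(ii). You pass to a setwise convergent \emph{subnet} and then assert that setwise convergence $P_\alpha\to P$ implies $P_\alpha^n(A)\to P^n(A)$ for every measurable $A\subseteq\cX^n$. For nets this is false in general. Index by finite Borel partitions $\pi$ of the circle $[0,1)$ with Lebesgue measure $\lambda$, and set $\mu_\pi=\sum_{C\in\pi,\,\lambda(C)>0}\lambda(C)\,\delta_{x_C}$, where the points $x_C\in C$ are all chosen in one coset $t+\mathbb{Q}$. This is possible because each such $C+\mathbb{Q}$ is conull, so a common $t$ exists. Then $\mu_\pi(E)=\lambda(E)$ as soon as $\pi$ refines $\{E,E^{c}\}$. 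Yet $\mu_\pi^{2}(\{x-y\in\mathbb{Q}\})=1$, while $\lambda^{2}$ of this set is $0$. The uniformity needed to extend from rectangles to all of $\mathcal{B}(\cX^n)$ is exactly what your argument lacks. Vitali--Hahn--Saks is a theorem about setwise convergent \emph{sequences}, not nets. Domination by $\sum_m 2^{-m}P_m+P$ only gives absolute continuity of each $P_m$ separately, not uniformly in $m$. So the continuity of $P\mapsto P^n(\{\test_n=1\})$ along your subnet is not established.

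The repair is short and uses the same observation as the paper. You do not need subnets. Your own argument shows that every subnet of $(P_m)$ has a further subnet converging setwise to $P$, so the sequence $(P_m)$ itself converges setwise to $P$. Now Vitali--Hahn--Saks legitimately gives uniform absolute continuity of $\{P_m\}$ with respect to $\nu=\sum_m 2^{-m}P_m$, with $P\ll\nu$. Equivalently, the densities $p_m$ are uniformly integrable in $L^1(\nu)$. Since $\{\prod_i p_m(x_i)>M\}\subseteq\bigcup_i\{p_m(x_i)>M^{1/n}\}$, the product densities are uniformly integrable in $L^1(\nu^n)$. Approximating $\{\test_n=1\}$ by finite unions of rectangles in $\nu^n$-measure then gives $P_m^n(\test_n=1)\to P^n(\test_n=1)$, and your contradiction goes through. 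Alternatively, setwise compactness of $\world$ forces $\world$ to be uniformly countably additive, hence uniformly absolutely continuous with respect to a single control measure; this would also rescue the net version.
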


In similar vein, \citet{Ermakov17} also consider existence of tests such that uniform consistency holds only for the null and weak consistency holds for the alternative (and the null since uniform consistency holds). 

\begin{restatable}{proposition}{unifclosed}\label{prop:unifclosed}
The following are equivalent:
 \begin{enumerate}
     \item[(i)]there exists a weakly consistent binary test that is uniformly consistent with respect to $\nll$.
     \item [(ii)] $\nll$ is closed and $\alt$ is $F_{\sigma}$ in the weak topology on $\cP\Paren{\cX}$.
 \end{enumerate}
 \end{restatable}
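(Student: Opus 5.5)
We prove the two implications separately. The standing assumption that $\world$ is compact in the weak topology does most of the work, and we reuse Propositions~\ref{prop:consistencyfsigma} and~\ref{prop:unifclopen}. All topological statements are relative to $\world$, or equivalently to $\cP\Paren{\cX}$ intersected with $\world$; since $\world$ is compact, closed subsets of $\world$ are compact.

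\emph{(ii)$\Rightarrow$(i).} Write $\alt=\bigcup_{k} C_k$ with each $C_k$ closed. Since $\nll$ and $C_k$ are disjoint and closed, Proposition~\ref{prop:unifclopen} gives, for each $k$, a test $\psi^{(k)}$ that is uniformly consistent for $\nll$ versus $C_k$. We combine these by a diagonal construction. For each $k$, choose $n_k$ increasing such that for all $n\ge n_k$ and all $j\le k$,
\[
\sup_{P\in \nll} P^n\Paren{\psi^{(j)}_n=1}\le 2^{-k}/k
\quad\text{and}\quad
\sup_{P\in C_j}P^n\Paren{\psi^{(j)}_n=0}\le 2^{-k}.
\]
For $n_k\le n<n_{k+1}$, set $\phi_n=\max_{j\le k}\psi^{(j)}_n$, so the test rejects if any of the first $k$ component tests rejects. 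A union bound gives $\sup_{\nll}P^n(\phi_n=1)\le k\cdot 2^{-k}/k\to0$, which is uniform consistency under the null. For $P\in\alt$, fix $j$ with $P\in C_j$. Once $k\ge j$, we have $P^n(\phi_n=0)\le P^n(\psi^{(j)}_n=0)\le 2^{-k}\to0$, which is weak consistency. Weak consistency on $\nll$ follows from uniform consistency.

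\emph{(i)$\Rightarrow$(ii).} By Proposition~\ref{prop:consistencyfsigma}, the hypotheses are disjoint and both are $F_\sigma$, so it remains only to show that $\nll$ is closed. Suppose instead that $P\in\overline{\nll}\setminus\nll$. Then $P\in\world$ because $\world$ is closed, and hence $P\in\alt$.

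The key step is a continuity property. For fixed $n$ and measurable $\phi_n$, we need the map $Q\mapsto Q^n(\phi_n=1)$ to be continuous at $P$ along a suitable approximating family. Weak convergence alone does not ensure this for arbitrary measurable sets, and this is where we use the assumption that $\world$ is compact in the topology of setwise convergence. Setwise convergence $Q_\alpha\to P$ implies $Q_\alpha^n\to P^n$ setwise on product sets, and then on all measurable sets of $\cX^n$ via a monotone-class/total-variation argument for finite products. One way to see this is that $|Q^n(A)-P^n(A)|\le n\,\|Q-P\|$ along sequences dominated by a common measure, using Vitali--Hahn--Saks.

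The delicate point, and the main obstacle, is obtaining a net in $\nll$ that converges to $P$ \emph{setwise} rather than only weakly. We would try to argue that on the setwise-compact set $\world$ the two topologies coincide. Setwise convergence is finer than weak convergence, the set is compact in the finer topology, and the weak topology is Hausdorff on $\cP\Paren{\cX}$ because $\cX$ is a separable metric space. A continuous bijection from a compact space onto a Hausdorff space is a homeomorphism, so $P$ is also a setwise limit of a net $Q_\alpha\in\nll$.

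Then, for each $n$, we get $P^n(\phi_n=1)=\lim_\alpha Q_\alpha^n(\phi_n=1)\le \sup_{\nll}Q^n(\phi_n=1)$, and the right-hand side tends to $0$. Hence $P^n(\phi_n=0)\to1$. This contradicts weak consistency at $P\in\alt$, so $\nll$ is closed.
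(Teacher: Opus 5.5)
Your proof is correct in outline but takes a different route from the paper. The paper handles both directions at once by citing Ermakov's Theorem 4.3 (Proposition~\ref{prop:unifclosed_ermakov}). Compactness of $\world$ in the setwise topology makes $\nll$ relatively compact and puts $\alt$ inside a $\sigma$-compact set. The resulting condition (containment in disjoint setwise-closed and setwise-$F_\sigma$ sets) is then transferred to the weak topology by intersecting with $\world$, on which the two topologies coincide.

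You instead derive the proposition from the other two preliminary results. Your sufficiency direction is a diagonal ``reject if any of the first $k$ tests rejects'' combination of the tests from Proposition~\ref{prop:unifclopen}. It is correct: that proposition applies because $\nll\cup C_k$ is weakly closed and contained in $\world$, hence setwise compact. It is also constructive and even gives uniform consistency on each $C_k$. Your necessity direction is the right idea, but it forces you to prove by hand the measure-theoretic fact that the citation of Ermakov hides, and that is where your write-up does not hold up.

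Specifically, what you wrote does not establish continuity of $Q\mapsto Q^n(A)$ for arbitrary Borel $A\subseteq\cX^n$. The bound $|Q^n(A)-P^n(A)|\le n\|Q-P\|$ gives nothing, because setwise convergence, even of a dominated sequence, does not imply total-variation convergence. For example, on $[0,1]$ the densities $1+\sin(2\pi m x)$ converge setwise to Lebesgue measure by Riemann--Lebesgue, yet stay at total-variation distance $2/\pi$. Moreover, Vitali--Hahn--Saks is a sequential theorem and says nothing along a net.

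The repair is standard:
\begin{itemize}
\item The weak topology on $\cP\Paren{\cX}$ is metrizable for separable metric $\cX$. So take a sequence $Q_m\in\nll$ with $Q_m\to P$ weakly; it then converges setwise by your homeomorphism argument on $\world$.
\item By Vitali--Hahn--Saks, $\{Q_m\}$ is uniformly absolutely continuous with respect to $\lambda\propto\sum_m 2^{-m}Q_m+P$. Hence the densities $f_m=dQ_m/d\lambda$ are uniformly integrable.
\item The product densities $\prod_{i\le n}f_m(x_i)$ are then uniformly integrable for $\lambda^n$: if the product exceeds $K$, some factor exceeds $K^{1/n}$. So $\{Q_m^n\}_m$ is uniformly countably additive.
\item Consequently the class of sets $A$ with $Q_m^n(A)\to P^n(A)$ is a monotone class containing the algebra generated by rectangles. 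It therefore equals $\cB(\cX)^{\otimes n}=\cB(\cX^n)$.
\end{itemize}
With this lemma in place, your contradiction argument goes through.
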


An even stronger error-guarantee desiderata on a test is that the worst-case error for $\nll$ (or $\alt$) is bounded by a predetermined constant $\alpha$ (or $\beta$) for every sample-size $n$, often referred to as Type-I (or Type-II) error guarantee \footnote{Although \citet{BoekenSGM25} prove that in terms of existence of tests the above notion of finite-sample error control is equivalent to uniform consistency.}. Nonexistence of binary tests that control both the Type-I and Type-II error for arbitrary $\alpha$ or $\beta$ follows from LeCam's two point lower bound \citep{LeCam73}. \citet{GeninKelly17}  consider tests with critical regions that are continuity sets (stronger than binary tests that are measurable functions) under the assumption of existence of densities for every measure in $\world$ (weaker than compactness) and provides the same topological conditions as that of Proposition~\ref{prop:unifclosed} for existence of consistent (weakly and strongly) binary tests that control Type-I error for arbitrary $\alpha$. They also provide same topological conditions as that of Proposition~\ref{prop:unifclopen} for existence of consistent (with respect to $\nll$ and $\alt$) ternary tests that control both the Type-I and Type-II error for arbitrary $\alpha$ and $\beta$. \citet{BoekenSGM25} consider ternary tests with open critical regions (corresponding to decisions $0$ and $1$) and provide same topological conditions as that of Proposition~\ref{prop:unifclosed} for existence of consistent binary tests that control Type-I error for arbitrary $\alpha$ and same topological conditions as that of Proposition~\ref{prop:unifclopen} for existence of consistent (with respect to $\nll$ and $\alt$) ternary tests that control both the Type-I and Type-II error for arbitrary $\alpha$ and $\beta$, all without making any assumptions on $\world$.

%% file: math_ternary.tex
\section{Ternary Testing}\label{sec:three-arm}



As mentioned earlier, $\nll, \alt \subseteq \cP\Paren{\cX}$ where $\cX$ is a separable metric space and $\cP\Paren{\cX}$ denotes the set of all Borel probability measures. In general, for overlapping hypotheses $\nll$ and $\alt$, $\world \triangleq \nll \cup \alt$ is partitioned in three disjoint sets, $\prt_0 \triangleq \nll \setminus \alt, \prt_1 \triangleq \alt \setminus \nll$ and $\prt_{\idk} \triangleq \nll \cap \alt$. Table~\ref{tab:truth_all_error} summarizes an outcome-truth table for ternary hypothesis testing. We denote errors by the outcome-truth pairs, i.e., for example, if a test outputs $0$ on a sample drawn from the underlying distribution in $\prt_2$ then it makes a $\Paren{0,\idk}$-error. We will extend this notation to binary tests as well; i.e., a false positive is a $(1,0)$-error, and a false negative is a $(0,1)$-error. A `ternary test' $\test$ is a sequence of measurable functions $\Paren{\phi_1, \phi_2, \cdots}$ where $\phi_n : \cX^n \mapsto \left \{ 0,1,\idk \right \}$ with the subscript denoting the sample-size. We will abuse notation to refer to indexed families of ternary tests as ternary tests.

\subsection{Testability for Ternary Testing}

 Given a pair of overlapping hypotheses $\nll$ and $\alt$, we will now define desiderata that ternary testing must satisfy. For binary tests, there is a long line of work that specifies such desiderata \citep{Berger51, LeCamSchwartz60, DemboPeres94, Genin18,GeninKelly17,Ermakov17,BoekenSGM25} in terms of consistency, uniform consistency and false positive and false negative error-control guarantees. We define analogous notions of consistency and error-control for the ternary case. 
\begin{definition}[Weak and Strong Consistency]
Given hypotheses $\nll, \alt \subseteq \mathcal{P}\Paren{\mathcal{X}}$, a ternary test $\phi = \left \lbrace \test_n \right \rbrace_{n \in \NN}$ where $\phi_n : \mathcal{X}^n \mapsto \left \lbrace 0,1,\idk \right \rbrace$ is 
\begin{enumerate}
    \item weakly consistent if  for all $i \in \left \lbrace 0,1,\idk \right \rbrace$, for all $P \in R_i$, $ \lim\limits_{n \rightarrow \infty} P^n\Paren{\test_n\Paren{X^n}=i} = 1$ ;
    \item strongly consistent if for all $i \in \left \lbrace 0,1,\idk \right \rbrace$, for all $P \in R_i$, $ P^{\infty}\Paren{\test_n\Paren{X^n} \neq i \text{ for finitely many }n} = 1$.
\end{enumerate}
\end{definition}
 A pair of hypotheses $\nll, \alt$ is weakly or strongly consistent if there exists a test $\test$ that is weakly or strongly consistent, respectively. For $P \in \prt_i$, we term $P^n\Paren{\test_n\Paren{X^n}=i}$ as \textit{probability of correct detection} (PCD). Therefore, a weakly consistent test implies that PCD approaches $1$ in the limit of infinite sample-size for all $P \in \prt_i$, for all $i$. In the rest of the paper, we drop the distinction between weak and strong consistency since all our results hold for both cases. We extend the notions of error control to the ternary hypothesis testing setup below, except that errors are controlled asymptotically and not for every $n$. 

\begin{definition}[$(i,j)$-error control]\label{def:ij_EC}
    A ternary test $\test =\left \lbrace \test_{\alpha,n} \right \rbrace_{n\in \NN, \alpha > 0}$ is said to asymptotically control the $(i,j)$-error if for all $\alpha > 0$,
        $\lim\limits_{n\rightarrow \infty}\sup\limits_{P \in \prt_j } P^n\Paren{\phi_{\alpha,n}(X^n) = i } \leq \alpha$. 
\end{definition}
In other words, controlling the $(i,j)$-error is an error-guarantee on concluding $i$ when the data-generating distribution belongs to $\prt_j$. Note that if there exists a ternary test that controls the $(i,j)$-error then there exists a ternary test $\left \lbrace \test'_n \right \rbrace_{n \in \NN}$ such that $\lim\limits_{n\rightarrow \infty}\sup\limits_{P \in \prt_j } P^n\Paren{\test'_{n}(X^n) = i }  = 0$ (See Section~\ref{app:asymp_control_UC} for a proof).


\begin{definition}[Ternary-Testable Hypotheses ($\TT\Paren{I}$)]
  For $I \subseteq \{\prt_0, \prt_1, \prt_{\idk}\}$, $\Paren{\nll, \alt} \subseteq \cP\Paren{\cX}^2$ is $\TT\Paren{I}$ if there exists a consistent ternary test $\test =\left \lbrace \test_{\bar{\alpha},n} \right \rbrace_{n \in \NN, \bar{\alpha} \succ 0}$ such that $\test$ controls all $(i,j)$-errors for $i\neq j$, $\prt_j \in I$ where $\bar{\alpha} = \{ \alpha_{ij} \}_{i\neq j, \prt_j \in I}$ and $\bar{\alpha} \succ 0$ implies each component is positive. 
\end{definition}

In terms of Table~\ref{tab:truth_all_error}, a $\TT\Paren{I}$ pair of hypotheses is equivalent to the existence of a test that controls the error-rate of each of the off-diagonal cells in columns corresponding to sets in $I$. 


To derive equivalent topological conditions on $\nll, \alt$ that are $\TT\Paren{I}$, we reduce the existence of ternary tests to the existence of appropriate binary tests, which implies topological conditions on hypotheses from the results listed in Section~\ref{sec:prelim}.\ifdefined \SINGLE \else The proof can be found in Appendix~\ref{app:pf_nec}.\fi



\begin{restatable}{lemma}{nec}\label{lem:ttk_implies_kclosed}
    If $\Paren{\nll, \alt}$ is $\TT\Paren{I}$, then all $R \in I$ are closed and all $R \notin I$ are $F_{\sigma}$ in the weak topology on $\cP\Paren{\cX}$. 
\end{restatable}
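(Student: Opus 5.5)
The plan is to reduce the ternary test to binary tests by merging outcomes, and then apply Propositions~\ref{prop:consistencyfsigma} and~\ref{prop:unifclosed} to the resulting binary problems. Fix $\prt_j$. Given a ternary test $\test$ witnessing $\TT\Paren{I}$, define the binary test $\psi_n = \mathbbm{1}\{\test_n \neq j\}$. Its null hypothesis is $\prt_j$ and its alternative is $\world \setminus \prt_j = \bigcup_{k\neq j}\prt_k$. These two sets are disjoint and their union is $\world$, so the standing compactness assumption still applies.

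First consider the weak consistency of $\psi$. For $P\in\prt_j$, ternary consistency gives $P^n(\test_n = j)\to 1$, hence $P^n(\psi_n=0)\to 1$. For $P\in\prt_k$ with $k\neq j$, we have $P^n(\test_n = k)\to1$, so $P^n(\test_n\neq j)\to 1$. Thus $\psi$ is weakly consistent for the pair $(\prt_j,\world\setminus\prt_j)$. Proposition~\ref{prop:consistencyfsigma} then gives that $\prt_j$ is $F_\sigma$, and this holds for every $j$. In particular every $R\notin I$ is $F_\sigma$.

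Now take $\prt_j\in I$. We need uniform consistency of some binary test on the null $\prt_j$. By the remark after Definition~\ref{def:ij_EC}, we can pick a ternary test $\test'$ with $\sup_{P\in\prt_j}P^n(\test'_n = i)\to 0$ for each $i\neq j$. Here there is a subtlety: $\test'$ is obtained from an $\bar\alpha$-indexed family, for example by a diagonal choice $\alpha_m\downarrow 0$ with $n_m$ increasing, so one must check that consistency on all three sets survives this choice. I would construct $\test'$ explicitly to secure this. For each $m$, choose $n_m$ large enough that $\test_{\alpha_m,n}$ has the $(i,j)$-errors below $2\alpha_m$ for $n\ge n_m$. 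Pointwise consistency at a fixed $P$ under such a diagonal is not automatic, so I would then \emph{combine} $\test'$ with a pointwise-consistent test $\test_{\alpha_1}$. Set $\psi_n = 1$ iff $\test'_n\neq j$ \emph{and} $\test_{\alpha_1,n}\neq j$. Under $P\in\prt_j$ we get $P^n(\psi_n=1)\le P^n(\test'_n\neq j)$, which goes to $0$ uniformly. Under $P\notin\prt_j$ the problem is that $\test'_n$ may fail to be $\neq j$ pointwise.

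This step is the main obstacle. To handle it, I would pick $n_m$ adaptively, because pointwise statements at each fixed $P$ only need the eventual behaviour along the diagonal. The cleaner alternative is to note that Proposition~\ref{prop:unifclosed} is really needed only for the implication ``uniform control on the null $\Rightarrow$ null closed''. That implication holds directly: if $P_k\to P$ weakly with $P_k\in\prt_j$ and $P\in\world\setminus\prt_j$, then along a subsequence $P$ lies in a fixed $\prt_k$, $k\ne j$. Fix $n$ large with $P^n(\test_{\alpha,n}=k)>1-\alpha$, using consistency of the single test $\test_\alpha$. Compactness in the setwise topology gives setwise convergence $P_k^n\to P^n$ on the measurable event $\{\test_{\alpha,n}=k\}$ along a further subnet. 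Then $P_k^n(\test_{\alpha,n}=k)>1-2\alpha$ eventually, which contradicts $(k,j)$-error control $\le\alpha$ (plus slack) for small $\alpha$. Hence $\prt_j$ is closed relative to $\world$. Since $\world$ is compact in the weak topology, it is closed (in the metrizable space $\cP\Paren{\cX}$), so $\prt_j$ is closed. The remaining technical point is the passage from sequences or nets in the weak topology to setwise convergence of products. I would handle it using compactness in the setwise topology and the fact that setwise convergence of $P_k$ implies setwise convergence of $P_k^n$ on products.
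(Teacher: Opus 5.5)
Your proof is correct. For the $F_\sigma$ part it is the paper's argument: merge outcomes into $\mathbbm{1}\{\test_n\neq j\}$ and apply Proposition~\ref{prop:consistencyfsigma}.

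For closedness of $\prt_j\in I$ you take a genuinely different route. The paper asserts that $\TT(I)$ yields a \emph{single} binary test of $\prt_j$ versus $\world\setminus\prt_j$ that is both weakly consistent and uniformly consistent on $\prt_j$, and then applies Proposition~\ref{prop:unifclosed}. It never explains how to extract one such test from the $\bar\alpha$-indexed family. As your abandoned first attempt correctly noticed, the diagonal construction of Proposition~\ref{prop:asymp_control_UC} controls the uniform error but need not preserve pointwise consistency off $\prt_j$.

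Your direct argument avoids this. It uses each $\test_\alpha$ separately: choose $n$ so that $A=\{\test_{\alpha,n}=k\}$ has probability at most $2\alpha$ under every $Q\in\prt_j$ but more than $1-\alpha$ under the limit point $P\in\prt_k$, then contradict continuity of $Q\mapsto Q^n(A)$. So your route is self-contained and needs no appeal to Ermakov's theorem.

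The price is two technical facts, which you should state cleanly rather than via subnets:
\begin{itemize}
\item[(a)] $\world$ is compact in the setwise topology, and the weak topology is coarser and Hausdorff, so the two topologies coincide on $\world$. Hence $\world$ is weakly closed and metrizable. A weakly convergent sequence in $\prt_j$ therefore has its limit $P$ in $\world$ and converges to it setwise, with no subnet needed.
\item[(b)] For \emph{sequences}, setwise convergence $P_m\to P$ implies $P_m^n(A)\to P^n(A)$. This follows by induction from $P_m^{n}(A)=\int P_m^{n-1}(A_x)\,dP_m(x)$ together with Fatou's lemma for setwise convergent sequences of measures.
\end{itemize}
The net version of (b) is not obvious, so drop the ``further subnet'' phrasing.

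Finally, tidy the write-up. You use $k$ both as the sequence index and as the hypothesis index. The phrase ``along a subsequence $P$ lies in a fixed $\prt_k$'' should simply read ``$P\in\prt_k$ for some $k\neq j$''. Delete the abandoned first attempt.
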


\ifdefined\SINGLE 
\begin{proof}
Since $\Paren{\nll, \alt}$ are $\TT\Paren{I}$, for each $\prt_i \in I$, there exists a consistent binary test to test $\prt_i$ as the null against $\world \setminus \prt_i$ as the alternative that is uniformly consistent with respect to $\prt_i$. By Proposition~\ref{prop:unifclosed}, $\prt_i$ is closed in the weak topology on $\cP\Paren{\cX}$ for each $\prt_i \in I$. For $\prt_j \notin I$, there exists a consistent binary test to test  $\prt_j$ as the null against $\world \setminus \prt_j$ as the alternative implying, by Proposition~\ref{prop:consistencyfsigma}, that $\prt_j$ is $F_{\sigma}$ in the weak topology on $\cP\Paren{\cX}$. 
\end{proof}
\else \fi

In practice, the utility of these necessary topological conditions is to check whether it is possible to construct `good' ternary tests for the notion of `good' that the practitioner cares about. Concretely, if these topological conditions do not hold, then one can abandon the search for a good ternary test. However, if they hold, a natural question is whether these topological conditions can guide in constructing a ternary test with desired error-guarantees. We show how this can be done in the following section.

\section{Two-Stage Ternary Tests: Combining Binary Tests}\label{sec:two-stage}
A natural way to obtain ternary tests is by combining two or more binary tests. Indeed, given that many binary tests have been developed and analyzed for a wide range of hypotheses, we would like to leverage existing binary tests to construct ternary tests. We term any such combination of two binary tests a ``two-stage ternary test''. In this section, we focus on the following specific form of two-stage ternary tests. In the first stage, we test for one of the $\prt_i$'s and its complement in $\world$\footnote{All complements are taken with respect to $\world$.} using a binary test. In this stage, depending on whether $\prt_i$ or $\prt_i^{\complement}$ is the null hypothesis, we term the resulting ternary test `Split-alternative' (SA) and `Split-null' (SN), respectively. If in the first stage the binary test concludes to either not reject $\prt_i$ or reject $\prt_i^{\complement}$, the ternary test concludes in favor of $\prt_i$. Otherwise, we `split'  $\prt_i^{\complement}$ and perform another binary test where the null and alternative hypotheses are either of the constituent sets of $\prt_i^{\complement}$, i.e., $\prt_j, \prt_k$ such that $\prt_i^\complement = \prt_j \cup \prt_k$. See Figure~\ref{fig:SA} and Figure~\ref{fig:SN} \ifdefined\SINGLE \else in Appendix~\ref{app:error_rate}\fi for an illustration. We now relate the properties of the resulting ternary tests in terms of the properties of its constituent binary tests. 

Consider two binary tests $\test_{1}, \test_{2}$ where the former tests the null hypothesis $\prt_{1N}$ against its complement which we denote by $\prt_{1A}$, and the latter tests the null hypothesis $\prt_{2N}$ against the alternative $\prt_{2A}$. 
Table~\ref{tab:error_sa} and Table~\ref{tab:error_sn} summarize how error-rates for errors in the outcome-truth table of Table~\ref{tab:truth_all_error} relate to the false positive ($(1,0)$-error) and false negative ($(0,1)$-error) error rates of the constituent binary tests, $\test_{1}$ and $\test_{2}$ which we denote by $\fone_{1}, \fzero_{1}$, and $\fone_{2}, \fzero_{2}$, respectively, i.e.,
\begin{align}
    \lim\limits_{n \rightarrow \infty} \sup\limits_{P \in \prt_{1N}} P^n\Paren{\phi_{1,n}(X^n) = 1} &\leq \alpha_1, \label{eq:ecalpha1} \\
    \lim\limits_{n \rightarrow \infty}  \sup\limits_{P \in \prt_{1A}}  P^n\Paren{\phi_{1,n}(X^n)=0 } &\leq \beta_1, \label{eq:ecbeta1} \\
    \lim\limits_{n \rightarrow \infty} \sup\limits_{P \in \prt_{2N}} P^n\Paren{\phi_{2,n}(X^{n}) = 1 } &\leq \alpha_2, \label{eq:ecalpha2}\\
    \lim\limits_{n \rightarrow \infty} \sup\limits_{P \in \prt_{2A}} P^n\Paren{\phi_{2,n}(X^{n})=0} &\leq \beta_2 . \label{eq:ecbeta2}
\end{align}
\ifdefined\SINGLE
\else The derivation of one concrete instance can be found in Appendix~\ref{app:error_rate}.\fi
We also show that sample reuse does not affect the error-rate analysis. 

\ifdefined\SINGLE
\input{error_rate_analysis}
\else \fi

\begin{table*}
\caption{}\label{tab:global}
    \begin{subtable}[t]{\textwidth}
    \centering 
\begin{tabular}{|l|*{3}{c|}}
      \hline
      \diagbox[width=\dimexpr \textwidth/5+2\tabcolsep\relax, height=1.2cm]{\makecell[c]{\hspace{0em}\vspace{0.1em}{{\footnotesize{Outcome}}}}}{\makecell[c]{\vspace{-0.8em}{{\footnotesize{Truth}}}}}
        & $\prt_0 \Paren{\nll \setminus \alt}$ & $ \prt_1 \Paren{\alt  \setminus \nll}$ & $ \prt_{\idk} \Paren{\nll \cap \alt}$ \\
      \hline
      $0$ & True $0$ & $\Paren{0,1}$-error & $\Paren{0,\idk}$-error \\
      \hline
      $1$ & $\Paren{1,0}$-error & True $1$ & $\Paren{1,\idk}$-error \\
      \hline
      $\idk$ & $\Paren{\idk,0}$-error &$\Paren{\idk,1}$-error & True $2$ \\
      \hline
    \end{tabular}
             \caption{Truth-Outcome table for a general ternary test. All off-diagonal cells are considered errors.}\label{tab:truth_all_error}
    \end{subtable}
    \begin{subtable}[t]{0.45\textwidth}\centering
    \begin{tabular}{|l|*{3}{c|}}
      \hline
      \diagbox[width=\dimexpr \textwidth/5+2\tabcolsep\relax, height=1.2cm]{\makecell[c]{\hspace{0.2em}{\rotatebox{327}{\footnotesize{Outcome}}}}}{\makecell[c]{{\rotatebox{324}{\footnotesize{Truth}}}}}
        & $\prt_{1A}$ & $\prt_{2N}$ & $\prt_{2A}$ \\
      \hline
      $1A$ & - & $\fone_1$ & $\fone_1$ \\
      \hline
      $2N$ & $\fzero_1$ & - & $\fzero_2$\\
      \hline
      $2A$ & $\fzero_1$ & $\fone_2$ & - \\
      \hline
    \end{tabular}
    \caption{Error table for $\sn$ test.} \label{tab:error_sn}
    \end{subtable} 
    \begin{subtable}[t]{0.45\textwidth}\centering
    \begin{tabular}{|l|*{3}{c|}}
      \hline
      \diagbox[width=\dimexpr \textwidth/5+2\tabcolsep\relax, height=1.2cm]{\makecell[c]{\hspace{0.2em}{\rotatebox{327}{\footnotesize{Outcome}}}}}{\makecell[c]{{\rotatebox{324}{\footnotesize{Truth}}}}}
        & $\prt_{1N}$ & $\prt_{2N}$ & $\prt_{2A}$ \\
      \hline
      $1N$ & - & $\fzero_1$ & $\fzero_1$ \\
      \hline
      $2N$ & $\fone_1$ & - & $\fzero_2$\\
      \hline
      $2A$ & $\fone_1$ & $\fone_2$ & - \\
      \hline
    \end{tabular}
    \caption{Error table for $\sa$ test.} \label{tab:error_sa}
    \end{subtable} 
\end{table*}

\subsection{Completeness of Two-Stage Ternary Testing}

So far, we related error properties of two-stage ternary tests with error properties of constituent binary tests. In this section, we show that the necessary topological conditions for a pair of hypotheses to be $\TT\Paren{I}$, i.e., necessary topological conditions for the existence of a `good' ternary test, implies the existence of a good two-stage ternary test. Therefore, the topological conditions are both necessary and sufficient for the existence of a good ternary test. 




\begin{restatable}{lemma}{suff}\label{thm:ttk2k}
    Let $I \subseteq \{ \prt_0, \prt_1, \prt_2 \}$. If all $\prt \in I$ are closed and all $\prt \notin I$ are $F_{\sigma}$ in the weak topology on $\mathcal{P}\Paren{\cX}$, then there exists a consistent two-stage ternary test that controls all $(i,j)$-errors such that $i\neq j$ and $\prt_j \in I$.
\end{restatable}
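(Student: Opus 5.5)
The plan is a case analysis on $|I|$. In each case we choose a split-alternative or split-null configuration and build the two constituent binary tests from Propositions~\ref{prop:consistencyfsigma}, \ref{prop:unifclosed} and \ref{prop:unifclopen}. We then read off the error control from Tables~\ref{tab:error_sn} and~\ref{tab:error_sa}.

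Two facts about $\world$ are used throughout.

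\begin{enumerate}
\item Complements: if $\prt_i$ is closed, then $\prt_i^{\complement}=\prt_j\cup\prt_k$ is open in $\world$. A metric space is perfectly normal, so every open set is $F_\sigma$; hence $\prt_i^\complement$ is $F_\sigma$ relative to $\world$. Since $\world$ is compact, hence closed, relatively closed sets of $\world$ are closed in $\cP\Paren{\cX}$, so $\prt_i^\complement$ is $F_\sigma$ in the weak topology.
\item Unions: if $\prt_j$ and $\prt_k$ are both closed (or both $F_\sigma$), so is $\prt_j\cup\prt_k$.
\end{enumerate}

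\textbf{Case $|I|=0$.} The sets $\prt_0,\prt_1,\prt_2$ are all $F_\sigma$. Take an SA test with $\prt_{1N}=\prt_0$ and $\prt_{1A}=\prt_1\cup\prt_2$. Both are $F_\sigma$ and disjoint, so Proposition~\ref{prop:consistencyfsigma} gives a weakly consistent $\test_1$. The same proposition gives a consistent $\test_2$ for $\prt_1$ versus $\prt_2$. The composite ternary test is then consistent: for $P\in\prt_1\cup\prt_2$, $\test_1$ outputs $1$ with probability tending to one, and $\test_2$ then outputs the correct label with probability tending to one. For $P\in\prt_0$ the first stage alone suffices. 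A union bound over the two stages makes this precise. For sample reuse I will run both tests on the same $X^n$, since only marginal probabilities enter the union bound.

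\textbf{Case $|I|=1$, say $I=\{\prt_i\}$.} Use SA with $\prt_{1N}=\prt_i$. By fact 1, $\prt_i^\complement$ is $F_\sigma$, so Proposition~\ref{prop:unifclosed} yields $\test_1$ uniformly consistent on $\prt_i$, giving $\alpha_1=0$. The second stage is a consistent test of $\prt_j$ versus $\prt_k$, which exists by Proposition~\ref{prop:consistencyfsigma}. In Table~\ref{tab:error_sa} the column $\prt_{1N}=\prt_i$ has entries $\alpha_1$ only, so all column-$i$ errors are controlled.

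\textbf{Case $|I|=2$, say $I=\{\prt_j,\prt_k\}$.} Use SN with $\prt_{1N}=\prt_i$ and $\prt_{1A}=\prt_j\cup\prt_k$. By fact 2, $\prt_{1A}$ is closed, so Proposition~\ref{prop:unifclosed} with the roles of null and alternative swapped gives $\test_1$ with $\beta_1=0$. For the second stage, $\prt_j$ and $\prt_k$ are disjoint and closed, so Proposition~\ref{prop:unifclopen} gives $\alpha_2=\beta_2=0$. In Table~\ref{tab:error_sn} the columns $\prt_{2N},\prt_{2A}$ then contain only $\beta_1,\alpha_2,\beta_2$, all zero.

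\textbf{Case $|I|=3$.} All three sets are closed. Any $\prt_i^\complement$ is closed by fact 2, so Proposition~\ref{prop:unifclopen} gives $\alpha_1=\beta_1=0$. Taking $\alpha_2=\beta_2=0$ as well, every error in either table is controlled.

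\textbf{Main obstacle.} The bookkeeping above is straightforward; the delicate part is rigor at two points. First, Propositions~\ref{prop:unifclosed} and~\ref{prop:unifclopen} are stated for a pair whose union is the whole working space. When the second stage tests $\prt_j$ against $\prt_k$, the union is a subset $\prt_j\cup\prt_k$ of $\world$. I would check that the relevant compactness assumption still holds, or else note that the proofs use only the topological conditions on the two sets. Second, the tables bound the composite errors by sums such as $\beta_1+\alpha_2$ under sample reuse. I would justify this via the union bound over the events $\{\test_{1,n}=\cdot\}$ and $\{\test_{2,n}=\cdot\}$, with $\limsup$ of $\sup$ of a sum at most the sum of the $\limsup\sup$'s. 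This yields error control for every $\bar\alpha\succ0$, by indexing the constituent binary tests by the corresponding $\alpha$'s where needed.
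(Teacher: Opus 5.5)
Your proposal is correct and follows the paper's own case analysis on $|I|$. It uses an SA test with the closed set as first-stage null for $|I|=1$; for $|I|=2$, a first stage uniformly consistent on the closed union $\prt_j\cup\prt_k$ followed by a uniformly consistent test of $\prt_j$ versus $\prt_k$; any configuration for $|I|=3$; and an explicit consistency argument for $|I|=0$, where the paper just says ``trivially true''. The compactness worry you flag is settled as in the paper: $\prt_j\cup\prt_k$ is weakly closed, hence closed in the finer setwise topology, hence compact inside $\world$.

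The only slip is in labeling. With $\prt_i$ as first-stage null, your $|I|=2$ test is an SA test in the paper's convention. Equivalently, after swapping the binary labels, it is the paper's SN test with $\prt_{1N}=\prt_j\cup\prt_k$. So the entries $\beta_1,\alpha_2,\beta_2$ you read off come from Table~\ref{tab:error_sa}, not Table~\ref{tab:error_sn}.
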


\input{optimality}

%% file: error_rate_analysis.tex
\begin{figure*}[t]
\begin{minipage}{0.40\textwidth}
\raggedright
\begin{tikzpicture}[
    grow=right, 
    every node/.style={draw, rectangle, rounded corners, minimum width=1cm, fill=blue!5},
    edge from parent/.style={draw, -{Stealth}, thick},
    level 1/.style={sibling distance=2cm, level distance=2cm},
    level 2/.style={sibling distance=1.5cm, level distance=2.5cm},
    edge label/.style={draw=none, fill=white, font=\tiny, midway, rectangle, inner sep=2pt}
]
\node {$\phi_1$}
    child {
        node {Output $i$}
        edge from parent node[edge label] {reject $\prt_i$}
    }
    child {
        node {$\phi_2$}
        child {
            node {Output $j$}
            edge from parent node[edge label] {reject $R_j$}
        }
        child {
            node {Output $k$}
            edge from parent node[edge label] {not reject $R_j$}
        }
        edge from parent node[edge label] {not reject $\prt_i$}
    };
\end{tikzpicture} \caption{SN test} 
        \label{fig:SN}
\end{minipage}
 \hspace{5mm}
\begin{minipage}{0.40\textwidth}
\raggedright
\begin{tikzpicture}[
    grow=right, 
    every node/.style={draw, rectangle, rounded corners, minimum width=1cm, fill=blue!5},
    edge from parent/.style={draw, -{Stealth}, thick},
    level 1/.style={sibling distance=2cm, level distance=2cm},
    level 2/.style={sibling distance=1.5cm, level distance=2.5cm},
    edge label/.style={draw=none, fill=white, font=\tiny, midway, rectangle, inner sep=2pt}
]
\node {$\phi_1$}
child {
        node {$\phi_2$}
        child {
            node {Output $j$}
            edge from parent node[edge label] {reject $R_j$}
        }
        child {
            node {Output $k$}
            edge from parent node[edge label] {not reject $R_j$}
        }
        edge from parent node[edge label] {reject $\prt_i^{\complement}$}
    }
    child {
        node {Output $i$}
        edge from parent node[edge label] {not reject $\prt_i^{\complement}$}
    };
    
\end{tikzpicture}\caption{SA test}\label{fig:SA}
\end{minipage}
\end{figure*}

Consider a two-stage SA ternary test with constituent binary tests $\phi_1, \phi_2$
where  $\prt_{1N} = \prt_2$, $\prt_{2N} = \prt_0$. The combined ternary test is defined as
\begin{equation}\label{eq:test_inter_disj}
 \test_{n}(x^{n}) = 
    \begin{cases}
     \idk & \text{ if } \test_{1,n}(x^n)=0, \\
     0 & \text{ if } \test_{1,n}(x^n)=1, \test_{2,n}(x^{n})=0, \\
    1 & \text{ if } \test_{1,n}(x^n)=1, \test_{2,n}(x^{n})=1.
    \end{cases}
\end{equation}
For simplicity, we assume that both tests use the same number of samples. 
We now relate the error-rates of the two-stage ternary test to the asymptotic false positive and false-negative rates of its constituent binary tests. 

\textbf{ $\Paren{\idk,0}$-error and $\Paren{\idk,1}$-error}: The error rates for these errors are given by 
\begin{align*}
    &\sup\limits_{\param \in \prt_0} \param^{n}\Paren{\test_{n}\Paren{X^{n}}=\idk} = \sup\limits_{\param \in \prt_0} \param^n\Paren{\test_{1,n}\Paren{X^n}=0},\\
    &\sup\limits_{\param \in \prt_1} \param^{n}\Paren{\test_{n}\Paren{X^{n}}=\idk} = \sup\limits_{\param \in \prt_1} \param^n\Paren{\test_{1,n}\Paren{X^n}=0}
\end{align*}
respectively. From \eqref{eq:ecbeta1}, since $\prt_{1A} = \prt_0 \cup \prt_1$,
\begin{equation*}
    \max\Paren{\lim\limits_{n \rightarrow \infty} \sup\limits_{\param \in \prt_0}\param^{n}\Paren{\test_{n}\Paren{X^{n}}=\idk},\lim\limits_{n \rightarrow \infty} \sup\limits_{\param \in \prt_1}  \param^{n}\Paren{\test_{n}\Paren{X^{n}}=\idk}} \leq \fzero_1. 
\end{equation*}

\textbf{$\Paren{0,1}$-error and $\Paren{1,0}$-error}: The error rates for these errors are given by 
\begin{align*}
    \lim\limits_{n \rightarrow \infty}\sup\limits_{\param \in \prt_1} \param^{n}\Paren{\test_{n}\Paren{X^{n}}=0} &=  \lim\limits_{n \rightarrow \infty}\sup\limits_{\param \in \prt_1} \param^{n}\Paren{\test_{1,n}\Paren{X^n}=1 \cap \test_{2,n}\Paren{X^{n}}=0 }\\
    &\leq  \lim\limits_{n \rightarrow \infty}\sup\limits_{\param \in \prt_1} \param^n\Paren{\test_{2,n}\Paren{X^{n}}=0} \leq \fzero_2,
\end{align*}
\begin{align*}
     \lim\limits_{n \rightarrow \infty}\sup\limits_{\param \in \prt_0} \param^{n}\Paren{\test_{n}\Paren{X^{n}}=1} &=  \lim\limits_{n \rightarrow \infty}\sup\limits_{\param \in \prt_0} \param^{n}\Paren{\test_{1,n}\Paren{X^n}=1 \cap \test_{2,n}\Paren{X^{n}}=1} \\
    &\leq  \lim\limits_{n \rightarrow \infty}\sup\limits_{\param \in \prt_0} \param^n\Paren{ \test_{2,n}\Paren{X^{n}}=1} \leq \fone_2,
\end{align*}
 from \eqref{eq:ecbeta2} and \eqref{eq:ecalpha2} respectively. 

\textbf{$\Paren{0,\idk}$-error and $\Paren{1,\idk}$-error}: The error rates are given by
\begin{align*}
    \lim\limits_{n \rightarrow \infty}\sup\limits_{\param \in \prt_{\idk}} \param^{n}\Paren{\test_{n}\Paren{X^{n}}=0} &=  \lim\limits_{n \rightarrow \infty}\sup\limits_{\param \in \prt_{\idk}} \param^{n}\Paren{\test_{1,n}\Paren{X^n}=1 \cap \test_{2,n}\Paren{X^{n}}=0 },\\
     \lim\limits_{n \rightarrow \infty}\sup\limits_{\param \in \prt_{\idk}} \param^{n}\Paren{\test_{n}\Paren{X^{n}}=1} &=  \lim\limits_{n \rightarrow \infty}\sup\limits_{\param \in \prt_{\idk}} \param^{n}\Paren{\test_{1,n}\Paren{X^n}=1 \cap \test_{2,n}\Paren{X^{n}}=1}, 
\end{align*}
respectively. Since the right hand side of both the above equations can be bound by $\fone_1$ from \eqref{eq:ecalpha1}, 

\textbf{Remark}: Note that the proofs above reuse samples across the two stages. 

%% file: optimality.tex
\ifdefined\SINGLE \begin{proof}

    $I=\emptyset$: The statement is trivially true. 

    $|I|=1$: If $\prt_c \in I$ is closed in the weak topology on $\cP\Paren{\cX}$, by Proposition~\ref{prop:unifclosed}, there exists a consistent binary test that asymptotically controls the $(1,0)$-error-rate. Using this test in the first stage of an SA test where $\prt_c$ corresponds to $\prt_{1,N}$ controls the two required errors in Table~\ref{tab:error_sa}. For the second stage, since the remaining sets are $F_{\sigma}$, by Proposition~\ref{prop:consistencyfsigma}there exists a consistent binary test, thus making the resulting ternary test consistent with desired error control.  

    ${|I|=2}$ : Denote the closed sets by $\prt_{c_1}$ and $\prt_{c_2}$. Consider a SN test with $\prt_{1N}$ as $\prt_{c_1} \cup \prt_{c_2}$. Since $\prt_{c_1}$ and $\prt_{c_2}$ are closed in the weak topology on $\cP\Paren{\cX}$, their union is closed and since it is a subset of a compact set $\world$, it is also compact. Therefore, by Proposition~\ref{prop:unifclopen}, there exists a uniformly consistent test for testing $\prt_{c_1}$ as the null against $\prt_{c_2}$ as the alternative (or vice versa). This implies that in Table~\ref{tab:error_sn}, the errors corresponding to $\fone_1, \fone_2, \fzero_2$ are controlled. Since uniform consistency implies consistency, the existence of a consistent ternary test with desired error control is guaranteed. 

    $|I|=3$: Since all the sets are closed in the weak topology on $\cP\Paren{\cX}$, any pairwise union is closed. Since a closed subset of a compact set is compact, there exist constituent binary tests that are either a SA test or a SN test and asymptotically control both $(1,0)$-errors and $(0,1)$-errors, implying that all errors in Table~\ref{tab:error_sa} and Table~\ref{tab:error_sn} are controlled and the resulting ternary test is consistent. 
\end{proof}
 \else See Appendix~\ref{app:subsec:tt2k} for a proof.\fi From Lemma~\ref{lem:ttk_implies_kclosed} and Lemma~\ref{thm:ttk2k}, we conclude that two-stage ternary testing is complete for ternary testing where errors are controlled column-wise.
\begin{restatable}{theorem}{completeness}\label{thm:completeness_2stt}
    Let $I \subseteq \{R_0, R_1, R_2\}$, then the following are equivalent:
    \begin{enumerate}
        \item The sets in $I$ are closed in $\mathcal{P}(\mathcal{X})$, and the remaining are $F_\sigma$ in the weak topology on $\cP\Paren{\cX}$;
        \item There exists a consistent ternary test that controls all $(i, j)$-errors such that $i \neq j, \prt_j \in I$;
        \item There exists a consistent two-stage ternary test that controls all $(i, j)$-errors such that $i \neq j, \prt_j \in I$.
    \end{enumerate}
\end{restatable}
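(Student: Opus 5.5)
The plan is to prove the cycle of implications $1 \Rightarrow 3 \Rightarrow 2 \Rightarrow 1$. Each arrow is either an earlier lemma or immediate from the definitions, so the theorem is essentially bookkeeping.

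\textbf{Step $1 \Rightarrow 3$.} This is exactly Lemma~\ref{thm:ttk2k}. If every $\prt \in I$ is closed and every $\prt \notin I$ is $F_\sigma$ in the weak topology, that lemma supplies a consistent two-stage ternary test controlling all $(i,j)$-errors with $i \neq j$ and $\prt_j \in I$. Its construction is the case analysis on $|I|$, combining SA and SN tests with the error tables Table~\ref{tab:error_sa} and Table~\ref{tab:error_sn}. There is nothing new to do here.

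\textbf{Step $3 \Rightarrow 2$.} A two-stage ternary test is, by construction, a sequence of measurable maps $\phi_n : \cX^n \to \{0,1,\idk\}$. Indeed, as in \eqref{eq:test_inter_disj}, each $\phi_n$ is a composition of the measurable maps $\test_{1,n}, \test_{2,n}$ with a map on a finite set. Hence it is in particular a ternary test, and the same consistency and error-control properties carry over verbatim.

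\textbf{Step $2 \Rightarrow 1$.} Statement~2 is precisely the definition of $(\nll,\alt)$ being $\TT(I)$. Lemma~\ref{lem:ttk_implies_kclosed} then gives that every $\prt \in I$ is closed and every $\prt \notin I$ is $F_\sigma$.

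I expect the only real care point to be in this last step, where a ternary test is turned into a binary test of $\prt_i$ versus $\world\setminus\prt_i$ by outputting $0$ iff $\phi_n = i$. For $\prt_i \in I$, one must check that this binary test is uniformly consistent on $\prt_i$. The $(i,j)$-error control is asymptotic and indexed by $\bar\alpha$, so I would invoke the remark after Definition~\ref{def:ij_EC} (Section~\ref{app:asymp_control_UC}) to extract a single test sequence whose column errors tend to $0$. The supremum over $\prt_i$ of $P^n(\phi_n \neq i)$ is bounded by the sum of the two off-diagonal entries in that column, both of which vanish. Consistency of the binary test follows from consistency of the ternary test on each $\prt_k$.

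Propositions~\ref{prop:unifclosed} and~\ref{prop:consistencyfsigma} then yield closedness and the $F_\sigma$ property. These are exactly the steps already packaged in Lemma~\ref{lem:ttk_implies_kclosed}, so citing it closes the cycle.
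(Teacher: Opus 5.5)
Your proposal is the paper's proof: the same cycle $2\Rightarrow 1\Rightarrow 3\Rightarrow 2$, with $2\Rightarrow 1$ from Lemma~\ref{lem:ttk_implies_kclosed}, $1\Rightarrow 3$ from Lemma~\ref{thm:ttk2k}, and $3\Rightarrow 2$ because a two-stage test is in particular a ternary test. One caution about your side remark on the inside of Lemma~\ref{lem:ttk_implies_kclosed}: the diagonal sequence $\phi_{1/m,n}$ from Section~\ref{app:asymp_control_UC} is not shown to stay consistent on the sets $\prt_k\notin I$, since consistency of each fixed $\phi_{\bar\alpha}$ need not survive letting $\bar\alpha\to 0$ along with $n$, so your claim that consistency of the binary test follows from that of the ternary test does not apply to that sequence; if you want to verify this step yourself, get closedness of $\prt_i\in I$ directly from one fixed small $\bar\alpha$ instead (if $P_m\in\prt_i$ converged to some $Q\in\prt_k$ with $k\neq i$, then for a suitable fixed $n$ the event $\{\phi_{\bar\alpha,n}=i\}$ would have probability at least $2/3$ under every $P_m^n$ and at most $1/3$ under $Q^n$, contradicting $P_m^n\to Q^n$ setwise, which follows because the weak and setwise topologies agree on the compact set $\world$ and setwise convergence passes to finite product measures).
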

\begin{proof}
    $2 \implies 1 \implies 3$ from Lemma~\ref{lem:ttk_implies_kclosed} and \ref{thm:ttk2k}, respectively. $3 \implies 2$ by definition.
\end{proof}

\subsection{Guidelines for Constructing Two-Stage Ternary Tests}

There are $12$ possible two-stage ternary test options available to the practitioner, namely $6$ options for the null hypothesis of the first stage and for each such option, $2$ options for the second stage. Therefore, constructing a two-stage ternary test that has desired error control properties requires choosing one of the $12$ options and then constructing constituent binary tests. While we do not address the latter\footnote{We refer the reader to the vast literature on (binary) hypothesis testing.}, 
in this section, we use the topological conditions of Theorem~\ref{thm:completeness_2stt} as a guide to the former part.

When the topological conditions in Theorem~\ref{thm:completeness_2stt} are satisfied for a particular $I$, we consider different cases depending on how many sets in $\{ \prt_0, \prt_1, \prt_2 \}$ are closed in the weak topology on $\mathcal{P}\Paren{\cX}$. Figure~\ref{fig:dt} in Appendix~\ref{app:dt} illustrates decision trees that provide a guide to which two-stage ternary tests provide the desired error control under different topological properties of $\prt_0, \prt_1, \prt_2$. Note that while all the tests in the leaf nodes provide the desired error control, those suggested in blue control additional errors. While the tests highlighted in blue are suggested because they control the most number of errors, it must be noted that this excludes tests where $\prt_{2N} \cup \prt_{2A}$ is not compact because Proposition~\ref{prop:unifclosed} depends on such a compactness assumption. As we shall see in Section~\ref{sec:IV}, Proposition~\ref{prop:unifclosed_ermakov} can be used to guarantee existence of uniformly consistent binary tests even when $\prt_{2N} \cup \prt_{2A}$ is not compact. Irrespective of topological properties, for any of the aforementioned $12$ options, if there exist consistent binary tests for the pairs of hypotheses in the option, with error rates $\alpha_1,\beta_1,\alpha_2,\beta_2$, then the error rates of the resulting SA or SN ternary test are given in Table~\ref{tab:error_sa} and Table~\ref{tab:error_sn}, respectively.

%% file: iv_three_arm.tex
\section{Applications to Testing Causal Queries}\label{sec:applications}
\subsection{Testing the Instrumental Variable (IV) Inequalities}\label{sec:IV}
\begin{figure}[t]
\begin{minipage}{0.45\textwidth}
     \centering
            \begin{tikzpicture}
            \tikzstyle{vertex}=[circle,fill=none,draw=black,minimum size=17pt,inner sep=0pt]
\node[vertex] (Z) at (0,0) {$Z$};
\node[vertex] (Y) at (3,0) {$Y$};
\node[vertex] (X) at (1.5,0) {$X$};
\path (Z) edge (X);
\path (X) edge (Y);
\path[bidirected] (X) edge[bend left=60] (Y);
            \end{tikzpicture}
        \caption{Causal graph of $M \in \modelivrelax$} 
        \label{fig:iv}
\end{minipage}  
\begin{minipage}{0.45\textwidth}
     \centering
            \begin{tikzpicture}
            \tikzstyle{vertex}=[circle,fill=none,draw=black,minimum size=17pt,inner sep=0pt]
\node[vertex] (Z) at (0,0) {$Z$};
\node[vertex] (Y) at (3,0) {$Y$};
\node[vertex] (X) at (1.5,1) {$X$};
\path (Z) edge (X);
\path (X) edge (Y);
\path (Z) edge (Y);
\path[bidirected] (X) edge[bend left=60] (Y);
            \end{tikzpicture}
        \caption{Causal graph of $\model \in \modelsivedge$} 
        \label{fig:iv_gen}
        \end{minipage}
        \end{figure}

We will now apply the ternary testing framework to provide a test for the IV inequalities \citep{Pearl95}. Define the IV model class, denoted by $\modelivrelax$, to be the set of causal models whose causal graph is a subgraph of Figure~\ref{fig:iv}. Define $\modelsivedge$ to be the set of causal models whose causal graph is a subgraph of Figure~\ref{fig:iv_gen} and note that $\modelivrelax \subset \modelsivedge$. A Markov kernel $K(X,Y \mid Z)$ defined on discrete random variables $X,Y,Z$ is said to satisfy the IV inequalities if 
\begin{equation}\label{eq:iv_notes}
    \max_{x \in \cX} \sum\limits_{y \in \cY} \max_{z \in \cZ} K(X,Y \mid Z) \leq 1.
\end{equation}
\citet{Pearl95} proved that if $X,Y,Z$ are discrete random variables, then a necessary condition for $\model \in \modelivrelax$ is that $P_{\model}\Paren{X,Y \mid \doop{Z}}$ satisfies \eqref{eq:iv_notes}\footnote{Note that \citet{Pearl95} and most of subsequent literature frames the IV inequalities as conditions that the observational distribution $P_{\model}(X,Y \mid Z)$ satisfies. However, this only holds under the tacit assumption of positivity which is taken for granted since the instrument, $Z$, is often randomized. When positivity of $Z$ is not guaranteed, IV inequalities are correctly expressed as conditions only on $P_{\model}(X,Y \mid \doop{Z})$ and thus, identifiability is not guaranteed.}. We view testing the IV inequalities as testing the binary causal query that evaluates to $0$ if $P_{\model}\Paren{X,Y \mid \doop{Z}}$ satisfies \eqref{eq:iv_notes}. We only consider the case where the instrument is binary since it is known that the IV inequalities are, in general, not sufficient for falsifying the IV model class for non-binary instruments \citet{Bonet01, SongGCR25}. Further, \citet{BhadaneMBZ25} show that for the binary instrument, binary outcome, categorical treatment case, the IV inequalities are sufficient. 

 Testing the IV inequalities is the same as testing the causal null hypothesis,
\ifdefined\SINGLE
\begin{equation*}\label{eq:iv_null}
    \nulliv \triangleq \left \{ \model \in \modelsivedge: P_{\model}\Paren{X,Y \mid \doop{Z}} \text{ satisfies } \eqref{eq:iv_notes} \right \}.
\end{equation*}
\else 
\begin{equation*}\label{eq:iv_null}
    \nulliv \triangleq \left \{ \model \in \modelsivedge: P_{\model}\Paren{X,Y \mid \doop{Z}} \text{ satisfies } \eqref{eq:iv_notes} \right \}.
\end{equation*}
\fi 

$\nulliv$ induces the set of observational distributions, $\distivnull \triangleq \left \{ P_{\model}: \model \in \nulliv \right \}$ and the causal alternative hypothesis, $\altiv \triangleq \modelsivedge \setminus \nulliv$ induces the set of observational distributions, $\distivalt \triangleq \left \{ P_{\model} : \model \in \altiv \right \}$.
As before, denote $\prt_0 = \distivnull \setminus \distivalt, \prt_1 = \distivalt \setminus \distivnull, \prt_{\idk} = \distivnull \cap \distivalt$. Further, denote $\world \triangleq \distivnull \cup \distivalt$ and note that for $\cX, \cY, \cZ$ being discrete and finite sets, $\world=\Delta$ where $\Delta$ is the probability simplex of dimension $|\cX|\times |\cY| \times 2$. 

We first characterize $\prt_2$ by showing that the
distributions in $\prt_2$ are the same as those that violate positivity of $Z$.
\ifdefined\SINGLE
\begin{lemma}
\begin{equation}\label{lem:positivity_intersection}
   \prt_2 = \{ P_{\model} : \cZ_{pos, \model} \neq \cZ,  \model \in \modelsivedge \}.
\end{equation}
\end{lemma}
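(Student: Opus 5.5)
We prove the set equality by showing two inclusions, writing $\cZ_{pos,\model} \triangleq \{z \in \cZ : P_{\model}(Z=z) > 0\}$ for the positivity support of the instrument. Two facts drive the argument. First, in any $\model \in \modelsivedge$, $Z$ has no parents, so for $z \in \cZ_{pos,\model}$ we have $P_{\model}(X,Y \mid \doop{Z=z}) = P_{\model}(X,Y \mid Z=z)$. Second, the observational distribution $P_{\model}(X,Y,Z)$ places no constraint on $P_{\model}(X,Y\mid\doop{Z=z})$ when $z \notin \cZ_{pos,\model}$.

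\textbf{Inclusion $\supseteq$.} Let $P = P_{\model}$ with $\cZ_{pos,\model} \neq \cZ$. Since $\cZ$ is binary and $P$ is a probability measure, exactly one value $z_0$ has probability zero, and $\cZ_{pos,\model} = \{z_1\}$. We build two models $\model_0, \model_1 \in \modelsivedge$, both inducing $P$.
\begin{itemize}
\item For $\model_0$, keep the mechanism for $Z$ and set $X := f_X(Z,U)$, $Y := f_Y(X,U)$, where the functions are chosen so that $X,Y$ ignore $Z$: conditionally on any value of $Z$, the pair $(X,Y)$ is drawn from $P(X,Y \mid Z=z_1)$, for instance by letting the latent $U$ encode a draw $(x,y)$ from that law. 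Then $K(\cdot \mid z_0) = K(\cdot \mid z_1) = P(X,Y\mid Z=z_1)$. For a single distribution, $\sum_y \max_z K = \sum_y P(x,y\mid z_1) \le 1$, so \eqref{eq:iv_notes} holds and $\model_0 \in \nulliv$. This model even lies in $\modelivrelax$.
\item For $\model_1$, use the direct edge $Z \to Y$ permitted in $\modelsivedge$. Keep the behaviour under $Z=z_1$ unchanged, but under $\doop{Z=z_0}$ output a deterministic $X=x^*$ and a deterministic $y^*$ with $y^* \neq y'$, where $y'$ is some value with $P(x^*,y'\mid z_1) > 0$. Such an $x^*$ and $y'$ exist, and we need $|\cY| \ge 2$. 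Then $\sum_y \max_z K(x^*,y\mid z) \ge 1 + P(x^*,y'\mid z_1) > 1$, so $\model_1 \in \altiv$. Because $Z=z_0$ has probability zero, the modification does not change $P$.
\end{itemize}
Hence $P \in \distivnull \cap \distivalt = \prt_2$.

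\textbf{Inclusion $\subseteq$.} Let $P \in \prt_2$, so $P = P_{\model_0} = P_{\model_1}$ with $\model_0 \in \nulliv$ and $\model_1 \in \altiv$. Suppose for contradiction that positivity holds, so $\cZ_{pos} = \cZ$. By the first fact, both interventional kernels equal $P(X,Y\mid Z)$ for every $z$, so they coincide. The IV inequality then evaluates identically in both models, contradicting $Q(\model_0) \neq Q(\model_1)$. Note that $\cZ_{pos}$ depends only on $P$, so it is the same for both models.

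\textbf{Main obstacle.} The delicate step is checking that the constructed SCMs are legitimate members of $\modelsivedge$. Their graphs must be subgraphs of Figure~\ref{fig:iv_gen}; in particular the latent may be shared only between $X$ and $Y$, and $Z$ must stay exogenous and independent of $U$. We must also confirm that the interventional kernel under $\doop{Z=z_0}$ is well defined for the chosen mechanisms. If $|\cY| = 1$ the IV inequalities hold trivially and $\prt_2$ is empty, so the statement needs that degenerate case excluded (or handled separately); I would state $|\cY| \ge 2$ explicitly.
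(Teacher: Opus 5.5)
Your proof is correct and follows the same route as the paper. Positivity identifies $P_{\model}(X,Y\mid\doop{Z})$ with $P_{\model}(X,Y\mid Z)$, so $\prt_2$ contains only positivity-violating distributions; conversely, the kernel at the zero-probability instrument value can be redefined without changing $P_{\model}$, giving one model that satisfies and one that violates the IV inequalities. Your explicit constructions supply details the paper only asserts, in particular choosing $y^*\neq y'$ with $P(x^*,y'\mid z_1)>0$ to force a strict violation. So does your remark that $|\cY|\ge 2$ is needed, since otherwise $\prt_2=\emptyset$.
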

\begin{proof}
For $\model \in \modelsivedge$, $P_{\model}\Paren{X,Y \mid \doop{Z=z}} = P_{\model}\Paren{X,Y \mid Z=z}$ if $P_{\model}\Paren{Z=z}>0$ for all $z \in \cZ$. Therefore, if $P_{M}(Z=z) > 0$ for all $z \in \cZ$, then $P_{\model}\Paren{X,Y \mid \doop{Z}}$ satisfies \eqref{eq:iv_notes} if and only if $P_{\model}\Paren{X,Y \mid Z}$ satisfies \eqref{eq:iv_notes} and vice-versa. This implies, 
$$\Paren{\distivnull \cap \distivalt} \cap \left \{ P_{\model} : P_{\model}(Z=z) > 0 \text{ for all } z \in \cZ, \model \in \modelsivedge \right \} = \emptyset.$$
This implies $\Paren{\distivnull \cap \distivalt} \subseteq \left \{ P_{\model} : P_{\model}(Z=z) = 0 \text{ for some } z \in \cZ, \model \in \modelsivedge \right \}$. Since, for any $P_{\model}$ such that $P_{\model}(Z=z')=0$ for some $z' \in \cZ, \model \in \modelsivedge$, there exists two models $\model_1$ and $\model_2$ with parameters the same as $\model$ except that $P_{\model_1}\Paren{X,Y \mid \doop{Z=z'}}$ satisfies \eqref{eq:iv_notes} and $P_{\model_2}\Paren{X,Y \mid \doop{Z=z'}}$ does not satisfy \eqref{eq:iv_notes}. Since $P_{\model} = P_{\model_1} \in \distivnull, P_{\model} = P_{\model_2} \in \distivalt, P_{\model} \in \distivnull \cap \distivalt$, concluding the proof. \end{proof}
\else 
\begin{restatable}{lemma}{positivityintersection}\label{lem:positivity_intersection}
\begin{equation*}
    \prt_2 = \{ P_{\model} : P_{\model}\Paren{z} = 0 \text{ for some } z \in \cZ, \model \in \modelsivedge \}.
\end{equation*}
\end{restatable}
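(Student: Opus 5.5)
The plan is to prove the two inclusions separately. Both rest on one fact about models in $\modelsivedge$: the interventional kernel $P_{\model}\Paren{X,Y \mid \doop{Z=z}}$ coincides with the observational conditional $P_{\model}\Paren{X,Y \mid Z=z}$ whenever $P_{\model}(z)>0$. This holds because in Figure~\ref{fig:iv_gen} the variable $Z$ has no parents and no incoming bidirected edges, so it is exogenous and unconfounded. Concretely, I would write the structural equations $Z=f_Z(E_Z)$, $X=f_X(Z,E_{XY})$, $Y=f_Y(X,Z,E_{XY})$ with $E_Z$ independent of $E_{XY}$. Conditioning on $Z=z$ then leaves the law of $E_{XY}$ unchanged, which is exactly the distribution produced by the intervention $\doop{Z=z}$.

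For the inclusion $\prt_2 \subseteq \{P_{\model} : P_{\model}(z)=0 \text{ for some } z\}$, I argue by contraposition. Take $P$ with $P(z)>0$ for all $z \in \cZ$, and suppose $P = P_{\model_0} = P_{\model_1}$ with $\model_0 \in \nulliv$ and $\model_1 \in \altiv$. By the fact above, both interventional kernels equal the same observational kernel $P\Paren{X,Y \mid Z}$. This kernel would then have to both satisfy and violate \eqref{eq:iv_notes}, a contradiction, so $P \notin \prt_2$.

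For the reverse inclusion, take $\model \in \modelsivedge$ with $P_{\model}(z')=0$ for some $z'$. I build two models $\model_0,\model_1 \in \modelsivedge$ with $P_{\model_0}=P_{\model_1}=P_{\model}$, where $P_{\model_0}\Paren{X,Y\mid \doop{Z}}$ satisfies \eqref{eq:iv_notes} and $P_{\model_1}\Paren{X,Y\mid\doop{Z}}$ violates it. The construction keeps $f_Z$ and the law of $E_Z$, so $Z\neq z'$ almost surely. It keeps $f_X(z,\cdot)$ and $f_Y(\cdot,z,\cdot)$ for $z\neq z'$, and redefines only $f_X(z',\cdot)$ and $f_Y(\cdot,z',\cdot)$. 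Since $z'$ is never realized, the observational distribution is unchanged.

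Because $Z$ is binary, only one other value $z''$ remains, and its kernel $K(\cdot\mid z'')$ is fixed by $P_{\model}$. For $\model_0$, I choose the new $f_X,f_Y$ at $z'$ so that $K(\cdot\mid z') = K(\cdot\mid z'')$. Then $\max_x\sum_y\max_z K = \max_x\sum_y K(x,y\mid z'')\le 1$. If positivity fails at both values of $Z$ this is impossible, but that case cannot occur: $Z$ has some value of positive mass. For $\model_1$, I need $\max_x \sum_y \max\Paren{K(x,y\mid z''),K(x,y\mid z')} > 1$. Pick $x^*$ and $y^*$ such that $K(x^*,y\mid z'')>0$ for some $y\neq y^*$. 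I then set $K(\cdot\mid z')$ to be the point mass at $(x^*,y^*)$, which makes the sum equal to $1 + \sum_{y\neq y^*}K(x^*,y\mid z'') > 1$.

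The main obstacle is this last step: such an $x^*,y^*$ may fail to exist. That happens if $K(\cdot\mid z'')$ is a point mass at some $(x_0,y_0)$, or if $|\cY|=1$. In the point-mass case I would instead take $K(\cdot\mid z')$ to be the point mass at $(x_0,y_1)$ with $y_1\neq y_0$, giving the sum $2>1$. The case $|\cY|=1$ is degenerate and must be excluded, which I would state as a standing assumption $|\cY|\ge 2$.

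The remaining detail is realizing both chosen kernels through valid structural equations within the graph of Figure~\ref{fig:iv_gen}. This is done by letting $E_{XY}$ carry an extra uniform component, used only when $Z=z'$, and realizing the kernel via inverse-CDF maps. This adds no edges, so $\model_0,\model_1 \in \modelsivedge$. By construction $\model_0\in\nulliv$ and $\model_1\in\altiv$, which puts $P_{\model}$ in $\distivnull\cap\distivalt=\prt_2$.
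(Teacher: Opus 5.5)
Your proposal is correct and follows the paper's own proof. The forward inclusion uses the fact that under positivity the interventional kernel equals the observational conditional; the reverse inclusion rewrites the mechanisms at the unrealized instrument value, copying the other value's kernel to satisfy \eqref{eq:iv_notes} and inserting a point mass to violate it. Your choice of $(x^*,y^*)$ with $K(x^*,y\mid z'')>0$ for some $y\neq y^*$ is in fact more careful than the paper's point-mass step, which never checks that the violation is strict. That choice always exists once $|\cY|\ge 2$, so your separate point-mass case is redundant; the standing assumption $|\cY|\ge 2$ you flag is genuinely needed, since for $|\cY|=1$ the set $\prt_2$ is empty.
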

The proof can be found in Appendix~\ref{app:subsec:positivity_intersection}.
\fi 
We outline topological properties of $\prt_0, \prt_1, \prt_2$ and $\world$ that will eventually lead us to a two-stage ternary test for the IV inequalities. 

$\textbf{\world}$: Since $\world$ can be regarded as a subset of $\RR^{|\cX| \times |\cY| \times |\cZ|}$, the topology of setwise convergence is equivalent to the product topology on $\RR^{|\cX| \times |\cY| \times |\cZ|}$. Since $\Delta$ is closed and bounded it is compact in the product topology and therefore, compact in the topology of setwise convergence. 

$\textbf{\prt}_2$: $\prt_2$ is closed in the weak topology on $\Delta$ since $ \exists z \in \cZ$ s.t. $P_{\model}(Z=z) = 0$.

$\textbf{\prt}_1$: For $\model \in \modelsivedge$, $P_{\model} \in \prt_1$ if and only if $P_{\model}\Paren{Z=z}>0$ for all $z \in \cZ$ and equation~\ref{eq:iv_notes} does not hold for $P_{\model}\Paren{X,Y \mid \doop{Z}} = P_{\model}\Paren{X,Y \mid Z}$. Since the mapping $P_{\model}\Paren{X,Y,Z} \mapsto P_{\model}\Paren{X,Y \mid Z}$ is continuous, $\prt_1$ is open in the weak topology on $\Delta$.



 

\begin{algorithm}[ht]
\caption{Two-stage Ternary Test for \eqref{eq:iv_notes} }\label{alg:iv}
\KwData{$(X_i,Y_i,Z_i)_{i=1}^{n}$}

\KwResult{$\test_n\Paren{(X_i,Y_i,Z_i)_{i=1}^{n}} \in \{0,1,2\}$}
 Define $f_{\text{pos}} : \cX^n \times \cY^n \times \cZ^n \mapsto \{0,1\}$ such that $f_{\text{pos}} ((X_i,Y_i,Z_i)_{i=1}^{n})=0$ if there exists $z \in \cZ$ such that $Z_i \neq z$ for all  $1\leq i \leq n$ and $1$ otherwise\;  
 Test $\prt_2$ vs $\prt_1 \cup \prt_0$ using $\test_1$ where $\test_1(X_i,Y_i,Z_i)_{i=1}^{n}=f_{\text{pos}} ((X_i,Y_i,Z_i)_{i=1}^{n})$ \;
  \eIf{$\test_1$ outputs $1$}{
   Test $\prt_0$ vs $\prt_1$ using IV test implemented by \citet{WangRR17}. $\phi_n$ outputs $1$ if IV test rejects null and $0$ otherwise\;}
  {
   $\test_n$ outputs $2$\;
  }
 
\end{algorithm}

\begin{algorithm}[t]
\caption{Two-stage Ternary Test for Treatment Efficacy Comparison }\label{alg:tec}
\KwData{$(X_i,Y_i)_{i=1}^{n}$, $c_{\text{threshold}}$}

\KwResult{$\test_n\Paren{(X_i,Y_i)_{i=1}^{n},c_{\text{threshold}}} \in \{0,1,2\}$}
 Test $\prt_0 \cup \prt_2 $ vs $\prt_1$ using a one-sided binomial test $\test_1$ for $P(X=1,Y=0)$ with threshold $1-c_{\text{threshold}}$ where $\test_1$ outputs $1$ if the null is rejected and $0$ otherwise\;
  \eIf{$\test_1$ outputs $0$}{
   Test $\prt_0 $ vs $\prt_2$ using a one-sided binomial test $\test_2$ for $P(X=1,Y=1)$ with threshold $c_{\text{threshold}}$\; $\test_n$ outputs $2$ if $\test_2$ rejects null and $0$ otherwise\;}
  {
   $\test_n$ outputs $1$\;
  }
 
\end{algorithm}

From Figure~\ref{fig:dt_1}, the above topological conditions suggest an SN test where $\prt_{1,A}$ is $\prt_1$ and $\prt_{2,N}$ is $\prt_2$ that controls the errors corresponding to $\fone_1, \fone_2$ in Table~\ref{tab:error_sn}, i.e., errors $(1,0), (0,2)$ and $(1,2)$ are controlled. A natural candidate for testing the IV inequalities, i.e., a positivity test followed by an IV inequalities test that assumes positivity, is given by an SA test with $\prt_{1N} = \prt_2, \prt_{2N} = \prt_0$. Since $\prt_{2N} \cup \prt_{2A}$ is not compact, Proposition~\ref{prop:unifclosed} does not apply. However, we use \cite{Ermakov17}'s Theorem 4.3 (Proposition~\ref{prop:unifclosed_ermakov}) to guarantee the existence of a test that asymptotically controls the $(1,0)$ error thus implying that the SA test and the aforementioned SN test control the same errors. Algorithm~\ref{alg:iv} specifies a SA test for the IV inequalities which uses a positivity test based on the empirical distribution and the IV inequalities test assuming positivity from \citet{WangRR17}. 

%% file: manski.tex
\subsection{Treatment Efficacy Comparison}
We now consider the problem of comparing the efficacy of a treatment against a threshold given data from an observational study. The treatment variable, $X$,
 has two possible states: untreated $(0)$, treated $(1)$. The outcome variable $Y$
 is binary indicating whether the treatment was effective $(1)$ or not $(0)$. Given a threshold $c \in [0,1]$, which might have been obtained, say from a large randomized control trial of a different treatment, and given observational data, we want to test whether the treatment efficacy is at least the threshold, i.e., whether $P\Paren{Y=1 \mid \doop{X=1}} \geq c$.

Define $\mathbb{M}_{TEC}$ to be the set of all acyclic causal models with endogenous variables $X$ and $Y$ where $\cX =\cY=\{0,1\}$. The causal null hypothesis is then defined as 
\begin{equation*}\label{eq:manski_null_causal}
    \nll_{TEC} \triangleq \{\model \in \mathbb{M}_{TEC} : P_{\model}\Paren{Y=1 \mid \doop{X=1}} \geq c\},
\end{equation*}
with the causal alternative hypothesis $\alt_{TEC} \triangleq \MM_{TEC} \setminus \nll_{TEC}$. The Manski bounds \citep{Manski90} can be used to bound the above query
\ifdefined\SINGLE
\begin{equation}
   P_{\model}\Paren{Y=1,X=1} + P_{M}\Paren{X \neq 1}
    \geq P_{\model}\Paren{Y=1 \mid \doop{X=1}} \geq P_{\model}\Paren{Y=1,X=1}. 
\end{equation}\label{eq:manski}  
\else 
\begin{align}
    &P_{\model}\Paren{Y=1,X=1} + P_{M}\Paren{X \neq 1} \nonumber \\
    &\geq P_{\model}\Paren{Y=1 \mid \doop{X=1}} \geq P_{\model}\Paren{Y=1,X=1}. \label{eq:manski}
\end{align}\fi
Since the Manski bounds are tight, the causal null hypothesis induces the set of observational distributions $\cP_{H^0_{TEC}} \triangleq \{ P_{\model} : \model \in \nll_{TEC} \}$ where
\begin{align*}
   \cP_{H^0_{TEC}} &= \{ P_{\model} : P_{\model}\Paren{Y=1,X=1} + P_{M}\Paren{X \neq 1} \geq c \}, \\
   &= \{ P_{\model} : P_{\model}\Paren{Y=0,X=1} \leq 1-c \}
\end{align*}
The causal alternative hypothesis induces the set of observational distributions $\cP_{H^1_{TEC}} \triangleq \{ P_{\model} : \model \in \alt_{TEC} \}$
\begin{equation*}
    \cP_{H^1_{TEC}} = \{ P_{\model} : P_{\model}\Paren{Y=1,X=1}  < c \}
\end{equation*}
As before, denote $\prt_0 = \cP_{\nll_{TEC}} \setminus \cP_{\alt_{TEC}}, \prt_1 = \cP_{\alt_{TEC}} \setminus \cP_{\nll_{TEC}}, \prt_2 = \cP_{\nll_{TEC}} \cap \cP_{\alt_{TEC}}$. Let $\world = \cP_{\nll_{TEC}} \cup \cP_{\alt_{TEC}}$, and $\Delta$ be the probability simplex of dimension $|\cX| \times |\cY|$. Note that $\prt_0 = \Delta \setminus \cP_{\alt_{TEC}}$ and $\prt_1 = \Delta \setminus \cP_{\nll_{TEC}}$ which implies that $\world = \Delta$. Since $\world$ can be regarded as a subset of $\RR^{|\cX| \times |\cY|}$, the topology of setwise convergence is equivalent to the product topology on $\RR^{|\cX| \times |\cY|}$ which implies that $\Delta$ is compact in the topology of setwise convergence. Further, note that $\prt_0$ is closed and $\prt_1$ is open in the weak topology on $\Delta$. By Figure~\ref{fig:dt_1}, a SN test with $\prt_{1A} = \prt_1$ and $\prt_{2N} = \prt_0$ provides maximum error control. Algorithm~\ref{alg:tec} specifies a SN test for treatment efficacy comparison where both tests use one-sided binomial proportions tests and control the $(1,0), (2,0)$ and $(1,2)$ errors.

%% file: numerical.tex
\subsection{Numerical Experiments}\label{subsec:numeric}
\begin{figure}[t]
\centering
\includegraphics[scale=0.55]{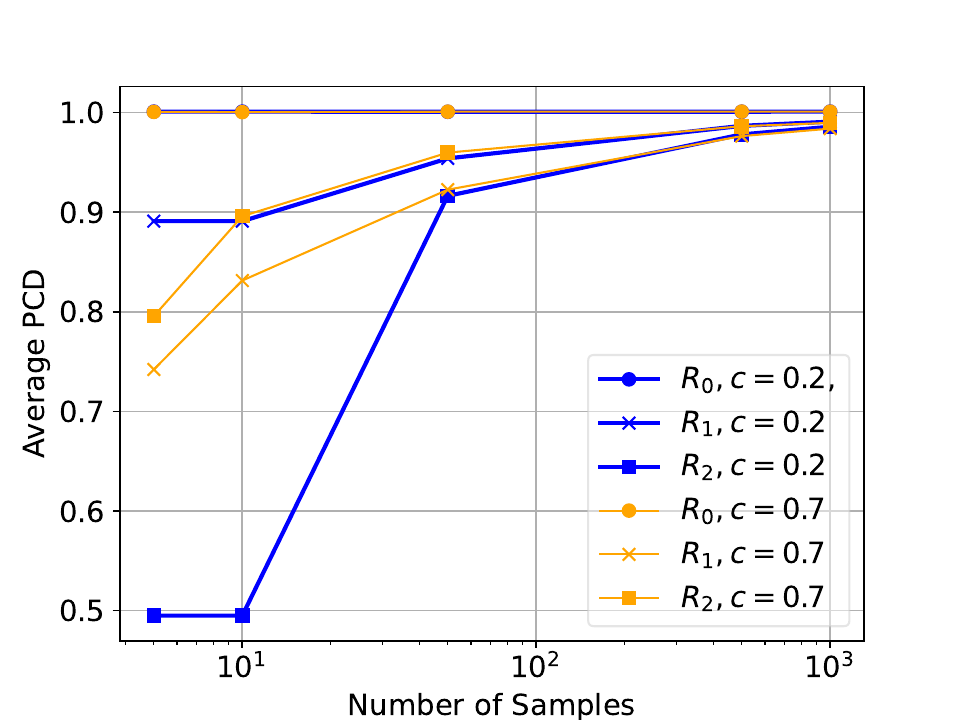}\caption{Average PCD of Algorithm~\ref{alg:tec} plotted as a function of number of samples. The size for the constituent binary tests was fixed at $0.025$ each. }\label{fig:power}
\end{figure}
We perform a simulation-based analysis for the treatment efficacy comparison case. To make the least parametric assumptions, we consider a data-generating process where a discrete random variable $U$ takes $|\cX| \times |\cY|^{|\cX|} = 8$ values such that each value corresponds to a response-function pair $(u_x,u_y)$ \citep{BalkePearl97} where $u_x \in \{0,1\}$ and $u_y$ is a one of four possible functions that map $\cX$ to $\cY$ . We sample $U \sim P_U$ and generate data $(X,Y) = (U_x,U_y(U_x))$. We sample $N_{\text{dist}}=1000$ distributions $P_U \sim \text{Dir}\Paren{1/8}$ and further sample $n$ i.i.d. samples of $(X,Y)$. Figure~\ref{fig:power} plots the average PCD of the ternary test for each of $\prt_0, \prt_1$ and $\prt_2$ as a function of $n$ for two values of $c$, which is akin to a binary hypothesis test power analysis. Note that for $\prt_0$, for both values of $c$, the power is high since the SN test of Algorithm~\ref{alg:tec} controls the $(1,0), (2,0)$ and $(1,2)$ errors.  

\ifdefined\SINGLE 
We compare the following natural, yet naive, ternary test that a practitioner might implement: Compute confidence intervals for the Manski bounds. If the threshold $c$ is included between the lower confidence bound of the lower Manski bound and the upper confidence bound of the upper Manski bound then the test returns $2$. If the threshold is lower than the the lower confidence bound of the lower Manski bound then the test returns $1$ and otherwise $0$. We compare our proposed ternary test of Algorithm~\ref{alg:tec} with this naive ternary test in Figure~\ref{fig:naive}. 

\begin{figure}[t]
\centering
\includegraphics[scale=0.55]{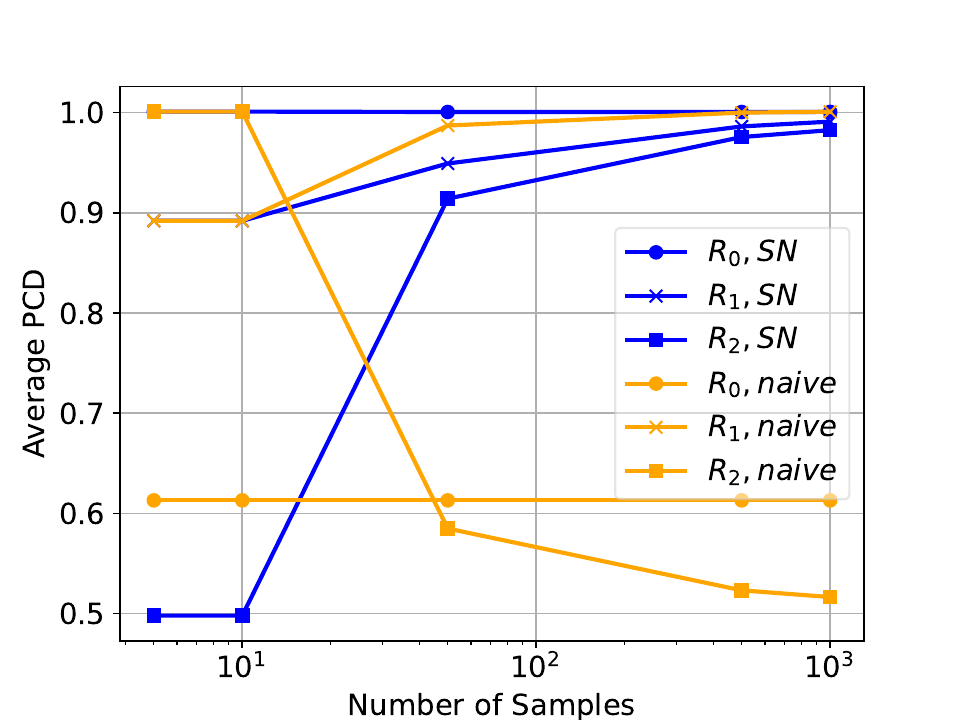}\caption{Comparison of the Average PCD of the naive ternary test versus Algorithm~\ref{alg:tec} plotted as a function of number of samples. The size for the constituent binary tests was fixed at $0.025$ each and the naive ternary test uses $97.5 \%$ confidence intervals for the Manski bounds.}\label{fig:naive}
\end{figure}
\else \fi

For the IV inequalities, we use the 1973 University of California, Berkeley graduate school admissions dataset \texttt{UCBAdmissions} \citet{UCBAdmissions} that contains counts for sex-department-admissions outcome for the $6$ largest departments at University of California, Berkeley. We follow the observation of \citet{BhadaneMBZ25}  that models this decision-making process as an instrumental variable model where sex is treated as an instrument, department choice as treatment and admission decision as outcome. On the Berkeley dataset, the positivity test outputs a p-value of $0$ since there is data on both sexes. The IV inequality test of \citet{WangRR17} outputs $0$
 as the p-value of the IV inequalities test 
 (i.e., less than the smallest positive number representable using double precision floating point format, i.e. $< 5 \times 10^{-324}$, 
 as reported in \citet{BhadaneMBZ25}. Thus, the IV inequalities are satisfied for the Berkeley dataset. 

%% file: discussion.tex
\section{Discussion}

We consider ternary testing for overlapping hypotheses which includes establishing desiderata related to error-guarantees of a ternary test in terms of uniform consistency. The existence of ternary tests that satisfy these desiderata can be characterized by topological conditions on the hypotheses. We introduced two-stage ternary tests that take advantage of the large number of binary tests developed so far. Since the design space of a ternary test is arguably larger than that of binary tests, the characterizing topological conditions guide our construction of two-stage ternary tests. We demonstrate this guide on two applications, namely testing IV inequalities and treatment efficacy comparisons.

We leave a) the relaxation of the compactness asssumption, and b) making the categorization of ternary-testable hypotheses finer by considering existence of ternary tests that control the error-rate for any combination of cells in the error table instead of errors in columns corresponding to sets in $I$, for future work.

%% file: appendix.tex
\section{Preliminaries (Causal Models)}\label{app:prelim_causal}
We outline a few definitions related to causal models that follow the formal setup of \cite{BongersFPM21}. 

\begin{definition}[Structural Causal Model (SCM)]
A \textbf{Structural Causal Model (SCM)} is a tuple $\model = \Paren{\enop,\exrv,\spc,f,P}$ where a) $\enop,\exrv$ are disjoint, finite index sets of \textbf{endogenous} and \textbf{exogenous} random variables respectively, b) $\spc = \prod_{i \in \enop \cup \exrv} \spc_i $ is the \textbf{domain} which is a product of standard measurable spaces $\spc_i$, c) for every $v \in \enop$, $f_v:\spc \mapsto \spc_{v}$ is a measurable function,  and $f = (f_v)_{v \in V}$ is called the \textbf{causal mechanism}; the equations $X_v = f_v(X)$ for every $v \in \enop$ are called the \textbf{structural equations}, and d) $P\Paren{\spc_{\exrv}} = \bigotimes_{w \in \exrv} P(X_w)$ is the \textbf{exogenous distribution} which is a product of probability distributions $P(X_w)$ on $\spc_{w}$.  
\end{definition}

\begin{definition}[Parent]
Let $\model = \Paren{\enop,\exrv,\spc,f,P}$ be an SCM. $k \in \enop \cup \exrv$ is a \textbf{parent} of $v \in \enop$ if and only if it is not the case that for all $x_{\enop \backslash k}, f_v(x_V,X_W)$ is constant in $x_k$ (if $k \in \enop$, resp. $X_k$ if $k \in \exrv$)  $P(\spc_{\exrv})$-a.s..
\end{definition}
\begin{definition}[Causal Graph]
             Let $\model = \Paren{\enop, \exrv, \spc, f,P}$ be an SCM. The \textbf{causal graph}, $\cg{\model}$, is a directed mixed graph with nodes $V$, directed edges $u \longrightarrow v$ if and only if $u \in \enop$ is a parent of $v \in \enop$, and bidirected edges $u \leftrightarrow v$ if and only if $ \,\exists w \in \exrv$ that is a parent of both $u,v \in \enop$. 
\end{definition}
For simplicity of exposition, we restrict attention to acyclic SCMs, i.e.\ SCMs whose causal graph is acyclic (contains no directed cycle $X \rightarrow \cdots \rightarrow Y \rightarrow X$).
\begin{definition}[Observational Distribution]
Given an acyclic SCM, $\model= \Paren{\enop,\exrv,\spc,f,P}$, the exogenous distribution, $P$ and the causal mechanism $f$ induce a probability distribution over the endogenous variables which is called the \textbf{observational distribution}, $P_{\model}(X_{\enop})$.
\end{definition}

\begin{definition}[Hard Intervention]
Given an acyclic SCM, $\model= \Paren{\enop,\exrv,\spc,f,P}$, an intervention target $\,T \subseteq \enop$, and an intervention value $x_T \in \cX_{T}$, the \textbf{intervened SCM} is defined as $\model_{\doop{X_T=x_T}} \triangleq \Paren{\enop,\exrv,\spc,(f_{V \backslash T},x_T),P}.$ Further, the observational distribution of the intervened SCM, $P_{\model_{\doop{X_T=x_T}}}$, is called an \textbf{interventional distribution}, and denoted by $P_{M}\Paren{X_V \mid \doop{X_T=x_T}}$.
\end{definition}

\section{Asymptotic Control and Uniform Consistency}\label{app:asymp_control_UC}
\begin{proposition}\label{prop:asymp_control_UC}
    Given a pair of hypotheses $\nll, \alt \subseteq \mathcal{P}\Paren{\cX}$, if there exists a ternary test that asymptotically controls the $(i,j)$-error then there exists a $\test' = \{ \test'_n \}_{n \in \NN}$ such that $\lim\limits_{n \rightarrow \infty} \sup\limits_{P \in \prt_j} P^{n}\Paren{\test'_n\Paren{X^n}=i} = 0$.
\end{proposition}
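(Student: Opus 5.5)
We plan to use a diagonal argument over the family $\{\test_{\alpha,n}\}$. By Definition~\ref{def:ij_EC}, for every $\alpha>0$ we have $\lim_{n\to\infty}\sup_{P\in\prt_j}P^n\Paren{\test_{\alpha,n}(X^n)=i}\le\alpha$. Define $e_\alpha(n)\triangleq\sup_{P\in\prt_j}P^n\Paren{\test_{\alpha,n}(X^n)=i}$. Then for each $k\in\NN$, applying the definition with $\alpha=1/(2k)$ (the limit exists and is at most $1/(2k)$) gives a threshold $N_k$ such that $e_{1/(2k)}(n)\le 1/k$ for all $n\ge N_k$. If the limit is read as a $\limsup$, the same conclusion holds.

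We may choose the $N_k$ strictly increasing: replace $N_k$ by $\max(N_1,\dots,N_k)+k$, which keeps the bound valid because it holds for all $n$ beyond the original threshold. Now define
\[
\test'_n \triangleq \test_{1/(2k),n} \quad\text{for } N_k\le n<N_{k+1},
\]
and for $n<N_1$ let $\test'_n$ be an arbitrary measurable map, say $\test_{1,n}$. Each $\test'_n$ is one of the original measurable functions on $\cX^n$, so it is measurable, and $\test'=\{\test'_n\}_{n\in\NN}$ is a legitimate ternary test.

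To verify the limit, fix $\varepsilon>0$ and choose $K$ with $1/K<\varepsilon$. For $n\ge N_K$ there is a unique $k\ge K$ with $N_k\le n<N_{k+1}$. Then
\[
\sup_{P\in\prt_j}P^n\Paren{\test'_n(X^n)=i}=e_{1/(2k)}(n)\le 1/k\le 1/K<\varepsilon.
\]
Hence $\lim_{n\to\infty}\sup_{P\in\prt_j}P^n\Paren{\test'_n(X^n)=i}=0$, as claimed.

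There is no real obstacle; the only care needed is in the index bookkeeping. The point is that the same $n$ must index both the test and the error bound, which is guaranteed because $N_k$ is a threshold valid for all larger $n$, not just a single $n$. If the statement is meant to preserve additional properties (consistency, other error controls) simultaneously, we would intersect the finitely many thresholds coming from each property at each stage $k$, using the same construction. The proposition as stated does not require this.
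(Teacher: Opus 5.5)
Your proposal is correct and is the same diagonal argument as the paper's, which likewise uses $\test_{1/m,n}$ on the block $N_m < n \le N_{m+1}$ and lets $m\to\infty$. Your bookkeeping is tighter than the paper's in two places. First, the slack of taking $\alpha=1/(2k)$ with bound $1/k$ is genuinely needed, because a limit of at most $1/m$ does not imply that the supremum is eventually at most $1/m$, as the paper asserts. Second, your strictly increasing $N_k$ and your use of the block index $k\ge K$ (rather than exactly $m'+1$) fix the paper's loose final step.
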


\begin{proof}
     By Definition~\ref{def:ij_EC}, there exists a ternary test $\test = \{\test_{\alpha,n}\}_{n \in \NN, \alpha >0}$ such that for all $\alpha >0, \lim\limits_{n \rightarrow \infty} \sup\limits_{P \in \prt_j} P^{n}\Paren{\test_{\alpha,n}\Paren{X^n}=i} \leq \alpha$. Consider the sequence $\alpha_m = \{ \frac{1}{m}\}_{m \in \NN}$. There exists $N_m \in \NN$ such that for all $n > N_m$, $\sup\limits_{P \in \prt_j} P^{n}\Paren{\test_{\frac{1}{m},n}\Paren{X^n}=i} \leq \frac{1}{m}$. Construct $\test'_n = \test_{\frac{1}{m},n}$ for $N_m < n \leq N_{m+1}$. Then for any $\varepsilon > 0$, there exists $m' \in \NN$ such that $\frac{1}{m'+1} < \varepsilon \leq \frac{1}{m'}$ and consequently there exists $N_{\varepsilon} = N_{m'+1}$ such that for all $n > N_{\varepsilon} = N_{m'+1}$, $$\sup\limits_{P \in \prt_j} P^{n}\Paren{\test'_{n}\Paren{X^n}=i} = \sup\limits_{P \in \prt_j} P^{n}\Paren{\test_{\frac{1}{m'+1},n}\Paren{X^n}=i} \leq \frac{1}{m'+1} < \varepsilon.$$ This implies that $\lim\limits_{n \rightarrow \infty} \sup\limits_{P \in \prt_j} P^{n}\Paren{\test'_n\Paren{X^n}=i} = 0$.
    \end{proof}

\ifdefined\SINGLE 
\else
\section{Error-Rate Analysis for Two-stage Ternary Test}\label{app:error_rate}

\input{error_rate_analysis}
\fi

\ifdefined\SINGLE
\else 
\section{Proofs}\label{app:pf_iv}
\subsection{Proof of Lemma~\ref{lem:ttk_implies_kclosed}}\label{app:pf_nec}
\nec*
\begin{proof}
Since $\Paren{\nll, \alt}$ are $\TT\Paren{I}$, for each $\prt_i \in I$, there exists a consistent binary test to test $\prt_i$ as the null against $\world \setminus \prt_i$ as the alternative that is uniformly consistent with respect to $\prt_i$. By Proposition~\ref{prop:unifclosed}, $\prt_i$ is closed in the weak topology on $\cP\Paren{\cX}$ for each $\prt_i \in I$. For $\prt_j \notin I$, there exists a consistent binary test to test  $\prt_j$ as the null against $\world \setminus \prt_j$ as the alternative implying, by Proposition~\ref{prop:consistencyfsigma}, that $\prt_j$ is $F_{\sigma}$ in the weak topology on $\cP\Paren{\cX}$. 
\end{proof}
 \subsection{Proof of Lemma~\ref{thm:ttk2k}}\label{app:subsec:tt2k}
 \suff* 
\begin{proof}

    $I=\emptyset$: The statement is trivially true. 

    $|I|=1$: If $\prt_c \in I$ is closed in the weak topology on $\cP\Paren{\cX}$, by Proposition~\ref{prop:unifclosed}, there exists a consistent binary test that asymptotically controls the $(1,0)$-error-rate. Using this test in the first stage of an SA test where $\prt_c$ corresponds to $\prt_{1,N}$ controls the two required errors in Table~\ref{tab:error_sa}. For the second stage, since the remaining sets are $F_{\sigma}$, by Proposition~\ref{prop:consistencyfsigma}there exists a consistent binary test, thus making the resulting ternary test consistent with desired error control.  

    ${|I|=2}$ : Denote the closed sets by $\prt_{c_1}$ and $\prt_{c_2}$. Consider a SN test with $\prt_{1N}$ as $\prt_{c_1} \cup \prt_{c_2}$. Since $\prt_{c_1}$ and $\prt_{c_2}$ are closed in the weak topology on $\cP\Paren{\cX}$, their union is closed and since it is a subset of a compact set $\world$, it is also compact. Therefore, by Proposition~\ref{prop:unifclopen}, there exists a uniformly consistent test for testing $\prt_{c_1}$ as the null against $\prt_{c_2}$ as the alternative (or vice versa). This implies that in Table~\ref{tab:error_sn}, the errors corresponding to $\fone_1, \fone_2, \fzero_2$ are controlled. Since uniform consistency implies consistency, the existence of a consistent ternary test with desired error control is guaranteed. 

    $|I|=3$: Since all the sets are closed in the weak topology on $\cP\Paren{\cX}$, any pairwise union is closed. Since a closed subset of a compact set is compact, there exist constituent binary tests that are either a SA test or a SN test and asymptotically control both $(1,0)$-errors and $(0,1)$-errors, implying that all errors in Table~\ref{tab:error_sa} and Table~\ref{tab:error_sn} are controlled and the resulting ternary test is consistent. 
\end{proof}

\subsection{Proof of Lemma~\ref{lem:positivity_intersection}}\label{app:subsec:positivity_intersection}
\positivityintersection*
\begin{proof}
Denote the right-hand side set in the above equation by $S_{\text{pos-viol}}$. For $\model \in \modelsivedge$, $P_{\model}\Paren{X,Y \mid \doop{Z=z}} = P_{\model}\Paren{X,Y \mid Z=z}$ if $P_{\model} \in S_{\text{pos-viol}}$. Further, in that case, $P_{\model}\Paren{X,Y \mid \doop{Z}}$ satisfies \eqref{eq:iv_notes} if and only if $P_{\model}\Paren{X,Y \mid Z}$ satisfies \eqref{eq:iv_notes}. This implies, 
$$\prt_2 \cap S_{\text{pos-viol}}^{\complement} = \emptyset.$$

This implies $\prt_2\subseteq S_{\text{pos-viol}}$. Let $P_{\model} \in S_{\text{pos-viol}}$ and define $z: \cX \times \cY \mapsto \cZ$ as $z(x,y) \triangleq \arg \max\limits_{z' \in \cZ} K\Paren{x,y \mid z'}$. Equation~\eqref{eq:iv_notes} can be expressed as 
\begin{equation*}
    \sum_y K\Paren{x^*,y \mid z(x^*,y)} \leq 1
\end{equation*}
where $x^* \triangleq \arg \max\limits_{x' \in \cX} \sum_y K\Paren{x',y \mid z(x',y)}$. 

If there exists $P_{\model_1} \in S_{\text{pos-viol}}$ such that $P_{\model_1}$ satisfies \eqref{eq:iv_notes}, then there exists some $y' \in \cY$ such that $z(x^*,y') = z$ where $x^* \triangleq \arg \max\limits_{x' \in \cX} \sum_y P_{\model_1} \Paren{x',y \mid\doop{z(x',y)}}$ and where w.l.o.g. we assume that $z \in \cZ$ is such that $P_{\model_1}(Z=z) = 0$. Define $\model_2 \in \modelsivedge$ such that $f_{\model_2}(X,Y,Z,U_X,U_Y,U_Z,U) = \bm{1}\Brack{X=x^*,Y=y'}\Paren{x^*,y',z} + (1-\bm{1}\Brack{X=x^*,Y=y'})f_{\model_1}(X,Y,Z,U_X,U_Y,U_Z,U)$ and the rest of the components in the tuple defining $\model_2$ are identical to that of $\model_1$. This implies that $P_{\model_2}\Paren{x^*,y' \mid \doop{z}} = 1$ implying that \eqref{eq:iv_notes} is not satisfied by $P_{\model_2}(X,Y \mid \doop{Z})$ but since $P_{\model_2}\Paren{Z=z} = 0$, induces the same observational distribution $P_{\model_2}$ as $P_{\model_1}$ implying $P_{\model_1} = P_{\model_2} \in \prt_2$. 

Assume to the contrary that no such $P_{\model_1} \in S_{\text{pos-viol}}$ exists such that $P_{\model_1}$ satisfies \eqref{eq:iv_notes}. Let $\model_1$ be any such model such that $P_{\model_1} \in S_{\text{pos-viol}}$ does not satisfy \eqref{eq:iv_notes} and assume that $P_{\model}(Z=z) = 0$. Define $\model_2$ such that $f_{\model_2}\Paren{
(X,Y,Z,U_X,U_Y,U_Z,U)} = f_{\model_1}\Paren{(X,Y,z',U_X,U_Y,U_Z,U)}$ where $z' \neq z$ (note $z'$ is unique since $|\cZ| = 2$) and the rest of the components of the tuple defining $\model_2$ are same as $\model_1$. Therefore, $P_{\model_2}\Paren{X,Y \mid \doop{z'}} = P_{\model_2}\Paren{X,Y \mid \doop{z}}$ for $z' \neq z$. This implies that the left hand side of \eqref{eq:iv_notes} evaluates to $1$ satisfying the IV inequalities. Since $P_{\model_2}(Z=z) = 0$, $P_{\model_1} = P_{\model_2} \in \prt_2$. Therefore, $\prt_2 = S_{\text{pos-viol}}$. 
\end{proof}
\fi
\ifdefined\SINGLE
\section{Proofs}\label{app:pf}
\else\fi
\subsection{Proofs for Propositions in Section~\ref{sec:prelim}}\label{app:sec:ermakov}

We prove Propositions~\ref{prop:consistencyfsigma},~\ref{prop:unifclopen} and ~\ref{prop:unifclosed} by stating the version in \citet{Ermakov17} first in both cases. Remember that we assume $\world \triangleq \nll \cup \alt$ to be compact in the topology of setwise convergence.

\begin{proposition}[Theorem 4.4 in \citet{Ermakov17}]\label{prop:consistencyfsigma_ermakov}If $\alt$ is contained in a $\sigma$-compact set in the topology of setwise convergence on $\cP\Paren{\cX}$, the following are equivalent: 
 \begin{enumerate}
     \item[(i)]there exists a binary test that is weakly consistent.
     \item [(ii)] $\nll$ and $\alt$ are contained respectively in disjoint $\sigma$-compact and $F_{\sigma}$ (i.e., countable union of closed sets) sets respectively.
 \end{enumerate}
 \end{proposition}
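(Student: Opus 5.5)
The plan is to prove both directions by working with the empirical measure $\hat P_n$ and a metric that metrizes the weak topology on $\cP\Paren{\cX}$, such as the bounded-Lipschitz or L\'evy--Prokhorov metric $d$, which is available because $\cX$ is separable. Two standard facts will be used throughout. First, Varadarajan's theorem: $d(\hat P_n,P)\to 0$ almost surely, hence in probability, for every $P$. Second, on any set that is compact in the topology of setwise convergence, and in particular on compact sets of the weak topology if needed, this convergence can be made uniform. I would derive the uniform version by a finite-cover argument: cover the compact set by finitely many small $d$-balls and use a Glivenko--Cantelli bound over a countable convergence-determining class of bounded Lipschitz functions.

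\emph{(ii)$\Rightarrow$(i).} Write $\nll\subseteq\bigcup_j K_j$ with $K_j$ increasing and compact, and $\alt\subseteq\bigcup_k F_k$ with $F_k$ increasing and closed, where $\bigcup_j K_j\cap\bigcup_k F_k=\emptyset$. By hypothesis $\alt$ also lies in a $\sigma$-compact set $\bigcup_k L_k$, so I replace $F_k$ by the compact set $C_k=F_k\cap L_k$, which is still increasing. For every pair $(j,k)$ the compact sets $K_j$ and $C_k$ are disjoint, so $\delta_{j,k}:=d(K_j,C_k)>0$. The test I propose is
\[
\phi_n=0 \quad\text{iff}\quad d\Paren{\hat P_n,K_{m(n)}}<\tfrac12\,\delta_{m(n),m(n)},
\]
with $m(n)\to\infty$ chosen slowly. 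Concretely, $m(n)$ is the largest $m\le n$ such that the uniform deviation bound $\sup_{P\in K_m\cup C_m}P^n\Paren{d(\hat P_n,P)\ge \delta_{m,m}/2}\le 1/m$ holds. Consistency then follows in each case:
\begin{itemize}
\item For $P\in\nll$ we have $P\in K_{j}$ for some $j$. Once $m(n)\ge j$, we get $P\in K_{m(n)}$, so the probability of outputting $1$ is at most $1/m(n)\to0$.
\item For $P\in\alt$ we have $P\in C_k$ for some $k$. Once $m(n)\ge k$, outputting $0$ forces $d(\hat P_n,P)\ge\delta_{m(n),m(n)}/2$, so the probability of error is again at most $1/m(n)$.
\end{itemize}
The step I expect to be the main obstacle is verifying that $m(n)\to\infty$. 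This needs the uniform convergence of $\hat P_n$ on each fixed compact $K_m\cup C_m$, and it is exactly where compactness, rather than mere closedness, of the pieces covering $\alt$ is indispensable. That is why the $\sigma$-compactness hypothesis on $\alt$ enters. A second, minor point is measurability of $\phi_n$, which follows because $x^n\mapsto d(\hat P_n,K)$ is continuous.

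\emph{(i)$\Rightarrow$(ii).} Given a weakly consistent $\phi$, define for $N\in\NN$
\[
A_N=\bigcap_{n\ge N}\{P:\ P^n(\phi_n=0)\ge 2/3\},\qquad B_N=\bigcap_{n\ge N}\{P:\ P^n(\phi_n=1)\ge 2/3\}.
\]
Weak consistency gives $\nll\subseteq\bigcup_N A_N$ and $\alt\subseteq\bigcup_N B_N$. The unions are disjoint, since no $P$ can have both probabilities at least $2/3$ for the same large $n$. Each $A_N,B_N$ is closed in the setwise topology, because $P\mapsto P^n(E)$ is setwise continuous for measurable $E\subseteq\cX^n$. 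I would prove that continuity by induction on $n$, using Fubini and the Vitali--Hahn--Saks uniform absolute continuity of setwise convergent sequences. Intersecting $A_N$ and $B_N$ with the compact pieces $L_k$ of the $\sigma$-compact set containing $\alt$, and with the compact $\world$ under our standing assumption, yields the required $\sigma$-compact and $F_\sigma$ covers. The remaining obstacle is transferring closedness from the setwise topology to the weak topology. Here I would follow Ermakov and first replace $\phi_n$ by a test whose acceptance regions are continuity sets, via smoothing of the empirical measure, before taking the sets $A_N,B_N$ as above.
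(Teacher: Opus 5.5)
The paper gives no proof of this statement; it is quoted from Ermakov. Your argument therefore has to stand on its own.

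\textbf{Direction (ii)$\Rightarrow$(i).} This direction works. Two incidental remarks:
\begin{enumerate}
\item The uniform deviation bound holds over all of $\cP\Paren{\cX}$, with no compactness, if you take $d(P,Q)=\sum_k 2^{-k}\lvert\int f_k\,dP-\int f_k\,dQ\rvert$ for a countable convergence-determining family with $\|f_k\|_\infty\le 1$ and apply Hoeffding.
\item What compactness of the $C_k$ really buys is $\delta_{j,k}>0$, because setwise-closed sets need not be weakly closed.
\end{enumerate}

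\textbf{The gap is in (i)$\Rightarrow$(ii).} You claim that $P\mapsto P^n(E)$ is continuous in the setwise topology. This is false in general. Fubini plus Vitali--Hahn--Saks gives only \emph{sequential} continuity, and the setwise topology on $\cP\Paren{\cX}$ is not first countable. A counterexample:
\begin{enumerate}
\item Take $\cX=[0,1]$, $\lambda$ Lebesgue measure, and $E=\{(x,y):x-y\in\mathbb{Q}\}$, so that $\lambda^2(E)=0$.
\item A basic setwise neighbourhood of $\lambda$ is determined by finitely many Borel sets. Let $B_1,\dots,B_r$ be the atoms of positive $\lambda$-measure of the algebra they generate.
\item Each $B_j+\mathbb{Q}$ is conull, so some $c$ lies in all of them. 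Pick $x_j\in B_j\cap(c+\mathbb{Q})$.
\item The measure $P=\sum_j\lambda(B_j)\delta_{x_j}$ agrees with $\lambda$ on the generating sets, yet $P^2(E)=1$.
\end{enumerate}
So for the test $\phi_2$ that outputs $1$ iff $x_1-x_2\in\mathbb{Q}$, the set $\{P:P^2(\phi_2=1)\ge 2/3\}$ has $\lambda$ in its setwise closure without containing it. Sets like your $B_N$ therefore need not be closed.

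\textbf{The repair.} Your skeleton survives once you restrict to the compact set $\world$, which you already invoke. On a setwise-compact set, the setwise topology coincides with the coarser Hausdorff weak topology, so it is metrizable there. Sequential continuity then suffices. Hence $A_N\cap\world$ and $B_N\cap\world$ are closed in $\world$, compact in both topologies, and in particular weakly closed. This also makes your final step superfluous: no smoothing of the empirical measure or passage to continuity-set acceptance regions is needed.

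\textbf{The role of compact $\world$.} The $\sigma$-compact cover of $\nll$ comes entirely from compactness of $\world$, not from the stated hypothesis on $\alt$. Without it, the implication as written fails. Take $\cX=\mathbb{R}$, $\alt=\{Q\}$ and $\nll=\cP(\mathbb{R})\setminus\{Q\}$.
\begin{enumerate}
\item A Kolmogorov--Smirnov test with threshold $\sqrt{\log n/n}$ is weakly consistent for this pair.
\item Yet $\nll$ lies in no setwise $\sigma$-compact set. A setwise-compact set contains only countably many Dirac masses: weak and setwise convergence agree on it, and $\delta_{x_k}(\{x\})=0$ for $x_k\neq x$. But $\nll$ contains uncountably many Dirac masses.
\end{enumerate}
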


 \consistent*

 \begin{proof}
   Since $\world$ is compact, it is closed, implying that  $\bar{\nll}$ and $\bar{\alt}$ are closed subsets of $\world$ and therefore, are compact. This implies that both $\nll$ and $\alt$ are contained in $\sigma$-compact sets. If there exists a binary test that is weakly consistent, then according to Proposition~\ref{prop:consistencyfsigma_ermakov}, $\nll \subseteq A$, $\alt \subseteq B$ such that $A$ is $\sigma$-compact and $B$ is $F_{\sigma}$ and $A$ and $B$ are disjoint. Then $\alt = B \cap W$ is $F_{\sigma}$. Since $\nll$ is contained in a $\sigma$-compact set and there exists a weakly consistent binary test, according to Proposition~\ref{prop:consistencyfsigma_ermakov}, $\nll \subseteq B_0$, $\alt \subseteq A_0$ such that $A_0$ is $\sigma$-compact and $B_0$ is $F_{\sigma}$ and $A_0$ and $B_0$ are disjoint. Then $\nll = B_0 \cap W$ is $F_{\sigma}$. For the converse, if $\nll$ and $\alt$ are disjoint and $F_{\sigma}$ then since $\nll$ is a countable union of closed sets that are subsets of a compact set, $\nll$ is $\sigma$-compact implying that there exists a binary test that is weakly consistent. 
 \end{proof}

\begin{proposition}[Theorem 4.1 in \citet{Ermakov17}]\label{prop:unifclopen_ermakov}
 If $\nll, \alt$ are relatively compact in the topology of setwise convergence on $\cP\Paren{\cX}$, then the following are equivalent:
 \begin{enumerate}
     \item[(i)]there exists a binary test that is uniformly consistent.
     \item [(ii)] $\nll$ and $\alt$ are such that their closures in the topology of setwise convergence on $\cP\Paren{\cX}$ are disjoint.
 \end{enumerate}
 \end{proposition}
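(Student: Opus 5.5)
The plan is to prove the two directions separately. For (ii)$\Rightarrow$(i) I would reduce to finitely many Borel sets and use empirical frequencies. For (i)$\Rightarrow$(ii) I would use continuity of $P\mapsto P^n(B)$ in the topology of setwise convergence on relatively compact families.

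\emph{(ii)$\Rightarrow$(i).}
\begin{enumerate}
\item Let $K_0,K_1$ be the setwise closures of $\nll,\alt$. By relative compactness they are compact, and by (ii) they are disjoint.
\item For each pair $(P,Q)\in K_0\times K_1$ pick a Borel set $A_{PQ}$ and $\delta_{PQ}>0$ with $|P(A_{PQ})-Q(A_{PQ})|>\delta_{PQ}$.
\item Consider the setwise-open product neighbourhood
$\{P' : |P'(A_{PQ})-P(A_{PQ})|<\delta_{PQ}/3\}\times\{Q' : |Q'(A_{PQ})-Q(A_{PQ})|<\delta_{PQ}/3\}$.
For any $(P',Q')$ in it, $|P'(A_{PQ})-Q'(A_{PQ})|>\delta_{PQ}/3$. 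These neighbourhoods cover $K_0\times K_1$, which is compact, so finitely many suffice.
\item Collect the finitely many sets involved into $A_1,\dots,A_m$ and set $F(P)=(P(A_1),\dots,P(A_m))$. Then $F$ is setwise continuous.
\item Every pair $(P',Q')\in K_0\times K_1$ is separated in some coordinate by some $\delta_{PQ}/3$. Hence the compact images $F(K_0),F(K_1)\subset\RR^m$ are at sup-distance at least some $\delta>0$.
\item Define $\phi_n(x^n)=0$ if the empirical vector $\hat F_n=(\frac1n\sum_i\mathbbm{1}[x_i\in A_j])_j$ is within $\delta/2$ (sup-norm) of $F(K_0)$, and $\phi_n=1$ otherwise. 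This is measurable because the distance to a compact set is continuous.
\item If $P\in\nll$ and $\phi_n=1$, then $\|\hat F_n-F(P)\|_\infty\ge\delta/2$. If $P\in\alt$ and $\phi_n=0$, then $\hat F_n$ is within $\delta/2$ of $F(K_0)$, hence at distance at least $\delta/2$ from $F(P)\in F(K_1)$. In both cases Hoeffding and a union bound give error at most $2m e^{-n\delta^2/2}$.
\item This bound does not depend on $P$, so the test is uniformly consistent.
\end{enumerate}

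\emph{(i)$\Rightarrow$(ii).}
\begin{enumerate}
\item Suppose some $P$ lies in both closures.
\item By uniform consistency, fix $n$ with $\sup_{\nll}P'^n(\phi_n=1)<1/4$ and $\sup_{\alt}Q'^n(\phi_n=0)<1/4$.
\item If $R\mapsto R^n(B)$ is setwise continuous along approximating sequences from $\nll$ and from $\alt$, then passing to the limit gives $P^n(\phi_n=1)\le 1/4$ and $P^n(\phi_n=0)\le1/4$. These sum to at most $1/2$, not $1$, which is a contradiction.
\end{enumerate}

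\emph{Main obstacle.} The step I expect to be hardest is justifying that continuity for \emph{arbitrary} Borel $B\subseteq\cX^n$, not only rectangles. I would attack it as follows.
\begin{enumerate}
\item A setwise relatively compact family is uniformly absolutely continuous with respect to a single finite measure $\mu$ (Nikodym/Vitali--Hahn--Saks, or Bartle--Dunford--Schwartz). Identify measures with their densities.
\item The setwise topology then becomes the weak $L^1(\mu)$ topology. By Dunford--Pettis the family is relatively weakly compact, and by Eberlein--\v{S}mulian, sequences suffice to reach points of the closure.
\item Show $P_k^n(B)\to P^n(B)$ by induction on $n$, writing $P_k^n(B)=\int P_k^{n-1}(B_x)\,p_k(x)\,d\mu(x)$.
\item The functions $g_k(x)=P_k^{n-1}(B_x)$ converge pointwise by the inductive hypothesis, are bounded by $1$, and the $p_k$ are uniformly integrable. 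Egorov plus uniform integrability then gives convergence of the integral.
\end{enumerate}
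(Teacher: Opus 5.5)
The paper gives no proof of this statement. It imports it from \citet{Ermakov17} (Theorem~4.1) and uses it only as a black box in the proof of Proposition~\ref{prop:unifclopen}. Your proposal is therefore a genuinely different, self-contained route, and it is correct.

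For (ii)$\Rightarrow$(i), you use compactness of $\overline{H^0}\times\overline{H^1}$ in the setwise topology (the initial topology of the maps $P\mapsto P(A)$) to extract finitely many separating Borel sets. This reduces the problem to a finite-dimensional one, where empirical frequencies and Hoeffding's inequality give an explicit test with the $P$-free bound $2me^{-n\delta^2/2}$. Citing the result gets the equivalence for free, but gives neither a construction nor a rate.

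For (i)$\Rightarrow$(ii), you correctly identify that the whole content is continuity of $R\mapsto R^n(B)$ for arbitrary Borel $B\subseteq\cX^n$. Your route works: a control measure, the identification of the setwise topology with the weak-$L^1$ topology on probability densities, Eberlein--\v{S}mulian to replace nets by sequences (which is what makes Egorov available), then Fubini induction. It also makes visible that relative compactness is used essentially only here, through uniform integrability.

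Two places need one more line each.

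\emph{Step (a).} Nikodym and Vitali--Hahn--Saks are statements about setwise convergent \emph{sequences}. Bartle--Dunford--Schwartz assumes relative compactness in the Banach weak topology. Your hypothesis is relative compactness in the coarser, non-metrizable setwise topology, so you need a bridge to uniform countable additivity. Suppose $P_k(A_k)>\varepsilon$ for disjoint $A_k$ and $P_k$ in the family. The map $P\mapsto (P(A_j))_{j\ge 1}$ is continuous from the setwise topology into $\ell^1$ with its weak topology, because $P\mapsto\int f\,dP$ is setwise continuous for bounded measurable $f$. It therefore sends the compact closure to a weakly compact subset of $\ell^1$, which is norm compact by Eberlein--\v{S}mulian and Schur. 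Norm-compact subsets of $\ell^1$ have uniformly small tails, which contradicts $P_k(A_k)>\varepsilon$ for all $k$.

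\emph{Step (d).} Write $\int g_kp_k\,d\mu-\int gp\,d\mu=\int(g_k-g)p_k\,d\mu+\int g(p_k-p)\,d\mu$. Egorov and uniform integrability control only the first term. The second vanishes by the weak-$L^1$ convergence established in (b).
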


\unifclopen*

 \begin{proof}
     Since $\world$ is compact, it is closed, implying that  $\bar{\nll}$ and $\bar{\alt}$ are closed subsets of $\world$ and therefore, are compact. This implies that $\nll$ and $\alt$ are relatively compact. If there exists a binary test that is uniformly consistent, by Proposition~\ref{prop:unifclopen_ermakov}, since $\bar{\nll}$ and $\bar{\alt}$ are disjoint and $\nll$ and $\alt$ are also disjoint, both $\nll$ and $\alt$ are closed in the weak topology on $\cP\Paren{\cX}$. If $\nll$ and $\alt$ are closed in the weak topology on $\cP\Paren{\cX}$, since $\world$ is compact implying the weak topology and the topology of setwise convergence coincide, their closures are disjoint in the topology of setwise convergence on $\cP\Paren{\cX}$. 
 \end{proof}

 \begin{proposition}[Theorem 4.3 in \citet{Ermakov17}]\label{prop:unifclosed_ermakov}
 If $\nll$ is relatively compact and $\alt$ is contained in a set that is $\sigma$-compact in the topology of setwise convergence, then the following are equivalent:
 \begin{enumerate}
     \item[(i)]there exists a weakly consistent binary test that is uniformly consistent with respect to $\nll$.
     \item [(ii)] $\nll$ and $\alt$ are contained in respectively disjoint closed and $F_{\sigma}$ sets in the topology of setwise convergence on $\cP\Paren{\cX}$.  
 \end{enumerate}
 \end{proposition}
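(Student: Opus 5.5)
The plan is to prove the two implications separately. Throughout, write $\tau_s$ for the topology of setwise convergence, and fix a $\sigma$-compact set $K=\bigcup_{m}K_m \supseteq \alt$ with each $K_m$ $\tau_s$-compact; replacing $K_m$ by $K_1\cup\dots\cup K_m$, we may take the $K_m$ increasing. A tool I will use in both directions is a \emph{continuity lemma}: on a $\tau_s$-relatively compact family $\mathcal{K}$, the map $P\mapsto P^n(A)$ is $\tau_s$-continuous for every $n$ and every Borel $A\subseteq\cX^n$. To prove it, I would use that a $\tau_s$-relatively compact family is dominated by a single probability measure $\mu$ with uniformly integrable densities (Dunford--Pettis / Vitali--Hahn--Saks). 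Setwise convergence on $\mathcal{K}$ is then weak-$L^1(\mu)$ convergence of the densities $p$. I would then argue by induction on $n$: write $P^n(A)=\int p(x_1)\,P^{n-1}(A_{x_1})\,\mu(dx_1)$, where $A_{x_1}$ is the section of $A$ at $x_1$, and combine the uniform integrability with the inductive hypothesis. I expect this lemma to be the main technical obstacle, since setwise convergence does not in general pass to product measures without such domination.

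\textbf{(i) $\Rightarrow$ (ii).} Let $\test$ be weakly consistent and uniformly consistent on $\nll$, and let $C$ be the $\tau_s$-closure of $\nll$; it is closed and compact by relative compactness. For $P\in C$ and each fixed $n$, the continuity lemma applied to $\mathcal{K}=C$ gives
\[
P^n(\test_n=1)\le \sup_{Q\in\nll}Q^n(\test_n=1)=:\varepsilon_n ,
\]
and $\varepsilon_n\to 0$ by uniform consistency. Hence $P^n(\test_n=1)\to 0$ for every $P\in C$. Since weak consistency forces $P^n(\test_n=1)\to 1$ for every $P\in\alt$, it follows that $C\cap\alt=\emptyset$. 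Next I would define, for $m,N\in\NN$,
\[
F_{m,N}\triangleq\{P\in K_m:\ P^n(\test_n=1)\ge \tfrac12 \text{ for all } n\ge N\}.
\]
By the continuity lemma on $\mathcal{K}=K_m$, each $F_{m,N}$ is an intersection of closed subsets of the compact set $K_m$, hence closed, so $F\triangleq\bigcup_{m,N}F_{m,N}$ is $F_\sigma$. Weak consistency gives $\alt\subseteq F$. Moreover $F\cap C=\emptyset$, because on $C$ the rejection probabilities tend to $0$.

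\textbf{(ii) $\Rightarrow$ (i).} Let $\nll\subseteq C$ with $C$ closed, and $\alt\subseteq\bigcup_k F_k$ with $F_k$ closed and disjoint from $C$. Set $\bar C\triangleq C\cap\overline{\nll}$, which is compact, and $G_j\triangleq\bigcup_{k,m\le j}(F_k\cap K_m)$. Each $G_j$ is compact, the sequence is increasing, $G_j\cap\bar C=\emptyset$, and $\alt\subseteq\bigcup_j G_j$. I would first build a uniformly consistent test between two disjoint $\tau_s$-compact sets $\bar C$ and $G_j$. For each pair $(P,Q)\in \bar C\times G_j$ there is a Borel set $A$ with $|P(A)-Q(A)|>0$, which gives an open neighbourhood pair in the product; by compactness of $\bar C\times G_j$, finitely many Borel sets $A_1,\dots,A_r$ and some $\delta_j>0$ suffice, in the sense that for every $P\in\bar C$ and every $Q\in G_j$ we have $\max_l|P(A_l)-Q(A_l)|\ge\delta_j$. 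The test $\psi_{j,n}$ rejects when the empirical frequency vector of $(A_1,\dots,A_r)$ is within $\delta_j/2$ of $\{(Q(A_l))_l:Q\in G_j\}$. By Hoeffding's inequality, both of its errors are at most $2r\,e^{-n\delta_j^2/2}$, uniformly over $\bar C$ and over $G_j$.

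Finally I would combine these tests. Define $\test_n=1$ if and only if $\psi_{j,n}=1$ for some $j\le J(n)$, where $J(n)\to\infty$ slowly enough that $\sum_{j\le J(n)}2r_je^{-n\delta_j^2/2}\to 0$; such a $J(n)$ exists by a diagonal argument. A union bound then gives uniform consistency on $\nll\subseteq\bar C$. For weak consistency on $\alt$: any $Q\in\alt$ lies in some $G_{j_0}$, and for all $n$ with $J(n)\ge j_0$ we have $Q^n(\test_n=0)\le Q^n(\psi_{j_0,n}=0)\to 0$. Weak consistency on $\nll$ follows from the uniform consistency there.
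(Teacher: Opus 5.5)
\textbf{Comparison with the paper.} The paper does not prove this proposition. It imports it as Theorem 4.3 of \citet{Ermakov17} and uses it in two places: to derive Proposition~\ref{prop:unifclosed} under compactness of $\world$, and to justify the SA test in the IV example. Your proposal therefore gives a self-contained proof where the paper has only a citation, and its overall structure is sound.

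For necessity, you use continuity of $P\mapsto P^n(A)$ on $\tau_s$-compact sets in two ways:
\begin{itemize}
\item to transfer the bound $P^n(\test_n=1)\le\varepsilon_n$ from $\nll$ to its closure $C$;
\item to make the sets $F_{m,N}$ closed subsets of the compact, hence closed ($\tau_s$ is Hausdorff), set $K_m$.
\end{itemize}
For sufficiency, you extract finitely many separating Borel sets from compactness of $\bar C\times G_j$, test with empirical frequencies and Hoeffding, and diagonalize over $j$. That direction is complete as written: the rejection region is the preimage of an open subset of $\RR^r$, and the choice of $J(n)$ is a routine diagonal argument.

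Compared with the bare citation, your route shows exactly where relative compactness of $\nll$ and $\sigma$-compactness of the container of $\alt$ are used. It also produces an explicit test whose error bounds are exponential and even uniform on each $G_j$.

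\textbf{The step to tighten: the continuity lemma.} Your induction writes $P^n(A)=\int p\,g_P\,d\mu$ with $g_P(x_1)=P^{n-1}(A_{x_1})$. For sequences it closes, since Egorov plus uniform integrability controls $\int p_k\lvert g_k-g\rvert\,d\mu$. But Egorov has no analogue for nets. For example, $g_F=\mathbf{1}_{[0,1]\setminus F}$, indexed by finite $F\subseteq[0,1]$, converges to $0$ pointwise while $\int g_F\,d\lambda=1$. So ``pointwise convergence plus uniform integrability'' does not by itself close the inductive step.

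Since $\tau_s$ is not first countable, both of your uses genuinely need continuity rather than sequential continuity. A point of $C$ is a priori only a limit of a net from $\nll$, and closedness of $F_{m,N}$ does not a priori follow from sequential closedness. Either of two repairs works.
\begin{itemize}
\item[(a)] By the same slicing argument, $\{p^{\otimes n}:P\in\mathcal K\}$ is uniformly integrable with respect to $\mu^n$. Approximate $A$ in $\mu^n$-measure by a finite disjoint union $B$ of measurable rectangles; here $\mathcal{B}(\cX^n)=\mathcal{B}(\cX)^{\otimes n}$ because $\cX$ is separable. Then $\sup_{P\in\mathcal K}\lvert P^n(A)-P^n(B)\rvert\le\sup_{P\in\mathcal K}P^n(A\triangle B)$ is small. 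Meanwhile $P\mapsto P^n(B)$ is a polynomial in finitely many values $P(B_{k,i})$, hence $\tau_s$-continuous. A uniform limit of continuous functions is continuous.
\item[(b)] Alternatively, $\mathcal{B}(\cX)$ is countably generated, so $L^1(\mu)$ is separable. The weakly compact sets of densities corresponding to $C$ and $K_m$ are then metrizable, and sequential continuity suffices on them.
\end{itemize}

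For the domination step itself, the cleanest justification has two parts. First, relative $\tau_s$-compactness forces uniform $\sigma$-additivity: a cluster point of $P_k$ with $P_k(A_k)\ge\epsilon$ and $A_k\downarrow\emptyset$ would satisfy $P(A_m)\ge\epsilon$ for every $m$, contradicting countable additivity. Second, a uniformly $\sigma$-additive family admits a control measure with respect to which it is uniformly absolutely continuous. With these repairs your proof is complete.
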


\unifclosed*
 \begin{proof}
     Since $\world$ is compact, it is closed implying that $\bar{\nll}$ and $\bar{\alt}$ are closed subsets of $\world$ and therefore, are compact. This implies that $\nll$ and $\alt$ are relatively compact. Further, $\alt \subset \world$ which is compact and also $\sigma$-compact. If (i) holds, by Proposition~\ref{prop:unifclosed_ermakov}, $\nll \subseteq A_0$ and $\alt \subseteq B_0$ such that $A_0$ is closed, $B_0$ is $F_{\sigma}$, and $A_0$ and $B_0$ are disjoint. Then $\nll = A_0 \cap \world$ is closed and $\alt = B_0 \cap \world$ is $F_{\sigma}$ in the weak topology on $\cP\Paren{\cX}$. Further, if $\nll$ and $\alt$ are closed and $F_{\sigma}$ in the weak topology on $\cP\Paren{\cX}$, then they are contained in disjoint closed and $F_{\sigma}$ sets in the topology of setwise convergence on $\cP\Paren{\cX}$ since $\world$ is compact.
 \end{proof}

\section{Decision Trees for Constructing Two-Stage Ternary Tests}\label{app:dt}
\begin{figure}[h]
\centering
\begin{subfigure}{0.45\linewidth}
\centering
    \begin{tikzpicture}[
    grow=down,
    level 1/.style={sibling distance=5cm, level distance=1.5cm},
    level 2/.style={sibling distance=1.5cm, level distance=1.5cm},
    edge from parent/.style={draw, -latex},
    every node/.style={rectangle, draw, rounded corners, inner sep=5pt, align=center},
    leaf/.style={fill=blue!10, draw=blue!80, font=\bfseries}
]
\node {$|I|=0$}
child { 
        node {$\prt_{onc}$ is open} 
        child{ node[leaf] {SN test with \\
        $\prt_{1A} = \prt_{onc}$\\$\prt_{2N}$ - either}}
        }
        child { 
        node {No open sets} 
        child { node[leaf] {None of the $12$ tests}}
        };
\end{tikzpicture}\caption{$|I|=0$}\label{fig:dt_0}
\end{subfigure}
\begin{subfigure}{0.40\linewidth}
\centering
    \begin{tikzpicture}[
    grow=down,
    level 1/.style={sibling distance=5cm, level distance=3cm},
    level 2/.style={sibling distance=1.5cm, level distance=4cm},
    edge from parent/.style={draw, -latex},
    every node/.style={rectangle, draw, rounded corners, inner sep=5pt, align=center},
    leaf/.style={fill=blue!10, draw=blue!80, font=\bfseries}
]
\node {$|I|=3$}
        child { node[leaf] {Any of the $12$ tests} };

\end{tikzpicture}\caption{$|I|=3$}\label{fig:dt_3}
\end{subfigure}

\vspace{10mm} 
\ifdefined\SINGLE
\begin{subfigure}{0.95\textwidth}\hspace{-30mm}
\begin{tikzpicture}[
    grow=down,
     level 1/.style={sibling distance=5.4cm, level distance=1.9cm},
    level 2/.style={sibling distance=3.1cm, level distance=2.2cm},
    edge from parent/.style={draw, -latex},
    every node/.style={rectangle, draw, rounded corners, inner sep=3.8pt, align=center},
    leaf/.style={fill=blue!10, draw=blue!80, font=\bfseries}
]
\else
\begin{subfigure}{0.95\textwidth}\hspace{-24mm}
\begin{tikzpicture}[
    grow=down,
     level 1/.style={sibling distance=5.3cm, level distance=1.9cm},
    level 2/.style={sibling distance=3cm, level distance=2.2cm},
    edge from parent/.style={draw, -latex},
    every node/.style={rectangle, draw, rounded corners, inner sep=3.8pt, align=center},
    leaf/.style={fill=blue!10, draw=blue!80, font=\bfseries}
]\fi
\node {$|I|=1$}
child{
node {$\prt_{cno},\prt_{onc1},\prt_{onc2}$}
child{node[leaf] {SN test with \\ $\prt_{1A} =\prt_{onc1}$ or $\prt_{onc2}$\\ $\prt_{2N} = \prt_{cno}$}
}
child{
node {SA test with \\ $\prt_{1N} = \prt_{cno}$\\ $\prt_{2N}$ - either}
}}
child{
node {$\prt_{cno},\prt_{onc},\prt_{nonc}$}
child
{node[leaf] {SN test with \\ $\prt_{1A} = \prt_{onc}$\\ $\prt_{2N}=\prt_{cno}$}
}
child
{
node{SA test with \\ $\prt_{1N} = \prt_{cno}$\\ $\prt_{2N}$ - either}
}
}
child
{node {$\prt_{co},\prt_{nonc1},\prt_{nonc2}$}
child
{node[leaf] {SN test with \\ $\prt_{1A} = \prt_{co}$\\ $\prt_{2N}$ - either}
}
child
{
node[leaf] {SA test with \\ $\prt_{1N} = \prt_{co}$\\ $\prt_{2N}$ - either}
}
}
child
{ node {$\prt_{cno},\prt_{nonc1}, \prt_{nonc2}$}
child{
node[leaf] {SA test with \\ $\prt_{1N}=\prt_{cno}$\\
$\prt_{2N}$ - either}
}
}
;
\end{tikzpicture}\caption{$|I|=1$}\label{fig:dt_1}
\end{subfigure}

\vspace{6mm}

\begin{subfigure}{0.95\textwidth}
\raggedleft
\ifdefined\SINGLE
\begin{tikzpicture}[
    grow=down,
     level 1/.style={sibling distance=8cm, level distance=1.5cm},
    level 2/.style={sibling distance=3.3cm, level distance=2cm},
    edge from parent/.style={draw, -latex},
    every node/.style={rectangle, draw, rounded corners, inner sep=5pt, align=center},
    leaf/.style={fill=blue!10, draw=blue!80, font=\bfseries}
]
\else
\begin{tikzpicture}[
    grow=down,
     level 1/.style={sibling distance=8cm, level distance=1.5cm},
    level 2/.style={sibling distance=3cm, level distance=2cm},
    edge from parent/.style={draw, -latex},
    every node/.style={rectangle, draw, rounded corners, inner sep=5pt, align=center},
    leaf/.style={fill=blue!10, draw=blue!80, font=\bfseries}
]
\fi
\centering
\node {$|I|=2$}
child{
node{$\prt_{cno1},\prt_{cno2},\prt_{onc}$ or\\
$\prt_{cno1},\prt_{cno2},\prt_{nonc}$}
child{node[leaf] {SN test with \\ $\prt_{1A} =\prt_{onc}$ or $\prt_{nonc}$\\ $\prt_{2N}$ - either}
}
}
child
{ node {$\prt_{co},\prt_{cno}, \prt_{nonc}$ or \\$\prt_{co},\prt_{c}, \prt_{o}$}
child{
node[leaf] {SA test with \\ $\prt_{1N}=\prt_{co}$\\
$\prt_{2N}=\prt_{cno}$}
}
child{
node[leaf] {SN test with \\ $\prt_{1A}=\prt_{co}$\\
$\prt_{2N}=\prt_{cno}$}
}
child{
node{SN test with \\ $\prt_{1A}=\prt_{nonc}$ or $\prt_{onc},$\\
$\prt_{2N}$- either}
}
}
;
\end{tikzpicture}\caption{$|I|=2$}\label{fig:dt_2}
\end{subfigure}\caption{Decision Trees where $I$ represents the closed sets. Subscript of $\prt$ indicates topological property: cno - closed and not open, onc - open and not closed, nonc - neither closed nor open, co - closed and open. All leaf nodes provide the desired error control expected from Theorem~\ref{thm:completeness_2stt}. The tests suggested by the blue colored leaf nodes provide error control for additional errors.}\label{fig:dt}
\end{figure}
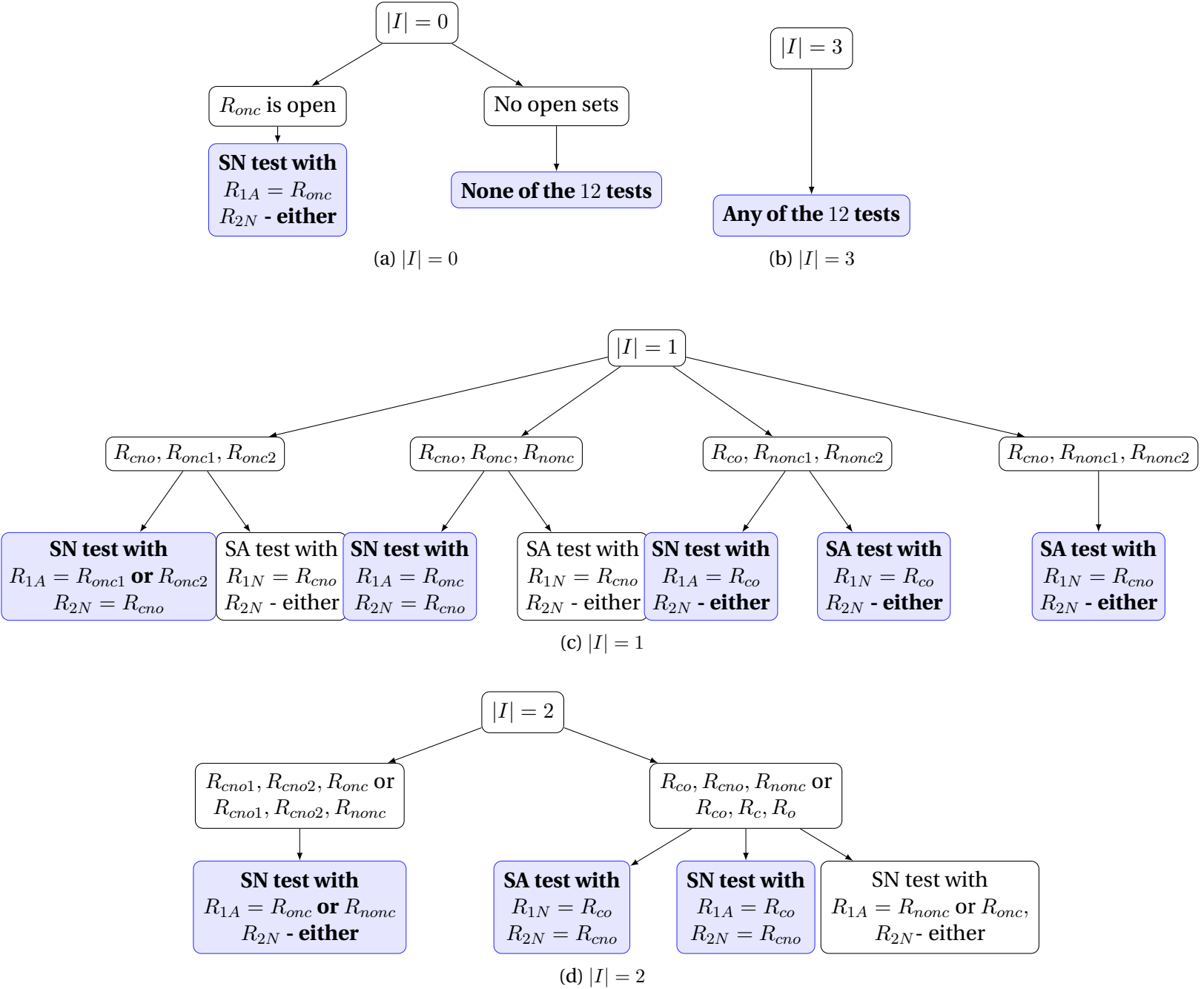